 \DeclareFontFamily{U}{shuffle}{}
 \DeclareFontShape{U}{shuffle}{m}{n}{%
 <5-8>shuffle7%
 <8->shuffle10%
 }{}
 \DeclareSymbolFont{Shuffle}{U}{shuffle}{m}{n}
\DeclareMathSymbol\shuffle{\mathbin}{Shuffle}{"001}
\DeclareMathSymbol\cshuffle{\mathbin}{Shuffle}{"002}
\newtheorem{theorem}{Theorem}[section]
\newtheorem{definition}{Definition}[section]
\newtheorem{example}{Example}[section]
\newtheorem{lemma}{Lemma}[section]
\newtheorem{remark}{Remark}[section]
\newenvironment{proof}[1][Proof]{\noindent\textbf{#1.} }{\ \rule{0.5em}{0.5em}}
\title{A multi-dimensional stream and its signature representation}
\author{Hao Ni }
\begin{document}
\maketitle
\abstract{The signature of a path is an essential object in the theory of rough paths. The signature representation of the data stream can recover standard statistics, e.g. the moments of the data stream. The classification of random walks indicates the advantages of using the signature of a stream as the feature set for machine learning. }
\section{Introduction}
This short paper is devoted to show that the signature of the lead-lag transformation is a useful way to encode a multi-dimensional unstructured data stream. We aim to demonstrate the following points: 
\begin{enumerate}
\item The signature of a discrete sample stream is a rich statistics and encodes the essential information of data stream;
\item The truncated signature of a discrete sample stream provides a summary in terms of the effect of this stream and it leads to dimension reduction for this original stream;
\item The signature of a discrete sample can be used for parameter inference and prediction.
\end{enumerate}
The main result is Theorem \ref{pmoment}, which states that no matter how frequently the path is sampled, the $p^{th}$ moment of the increment process is a linear functional on the truncated signature up to degree $p$.
\section{Notation and Preliminaries}
\subsection{Signatures }
Let us start with introducing the tensor algebra space, in which the signature of a path takes value.
\begin{definition}[Tensor algebra space]
A formal $E$-tensor series is a sequence of tensors $\left( a_{n}\in
E^{\otimes n}\right) _{n\in \mathbb{N}}$ which we write $a=\left(
a_{0},a_{1},\ldots \right) $. There are two binary operations on $E$-tensor series, an addition $+$ and a product $\otimes$, which are defined
as follows. Let $\mathbf{a}=(a_{0},a_{1},...)$ and $\mathbf{b}%
=(b_{0},b_{1},...)$ be two $E$-tensor series. Then we define
\begin{equation}
\mathbf{a}+\mathbf{b}=(a_{0}+b_{0},a_{1}+b_{1},...),
\end{equation}%
and
\begin{equation}
\mathbf{a}\otimes \mathbf{b}=(c_{0},c_{1},...),
\end{equation}%
where for each $n\geq 0$,
\begin{equation}
c_{n}=\sum_{k=0}^{n}a_{k}\otimes b_{n-k}.
\end{equation}%
The product $\mathbf{a}\otimes \mathbf{b}$ is also denoted by $\mathbf{a}%
\mathbf{b}$. We use the notation $\mathbf{1}$ for the series $(1,0,...)$,
and $\mathbf{0}$ for the series $(0,0,...)$. If $\lambda \in \mathbb{R}$,
then we define $\lambda \mathbf{a}$ to be $(\lambda a_{0},\lambda
a_{1},...). $
\end{definition}
\begin{definition}
The space $T\left( \left( E\right) \right) $ is defined to be the vector space of all formal $E$-tensors series.
\end{definition}
Similar to the real valued case, we can define the $\exp$ mapping on $T((E))$ as follows.
\begin{definition}
Let $\mathbf{a}$ be arbitrary element of $T((E))$. Then $\exp(\mathbf{a})$ is the element of $T((E))$ by
\begin{eqnarray*}
\exp(\mathbf{a}):= \sum_{n = 0} \frac{\mathbf{a}^{\otimes n}}{n!}.
\end{eqnarray*}
\end{definition}
Now we are in a position to give the definition of the signature of a path of bounded variation (finite length).
\begin{definition}[Signature of a path]
Let $J$ be a compact interval and $X$ be a continuous function of finite length, which maps $J$ to $E$. The signature $S(X)$ of $X$ over the time interval $J$ is an element $(1,X^{1},...,X^{n},...)$ of $T\left(\left( E\right)\right) $ defined for each $n\geq 1$ as follows
\begin{equation*}
X^{n}=\underset{u_{1}<...<u_{n},\ \ u_{1},...,u_{n}\in J}{\int \cdots \int }%
dX_{u_{1}}\otimes ...\otimes dX_{u_{n}},
\end{equation*}
where the integration is in the sense of Young's integral. The truncated signature of $X$ of order $n$ is denoted by $S^{n}(X)$, i.e. $%
S^{n}(X) = (1,X^{1},...,X^{n})$, for every $n \in \mathbb{N}$.
\end{definition}
\begin{remark}
Suppose that $\{e_{i}\}_{i=1}^{d}$ be a basis of $E$, and thus for every $n \geq 0$, $\{e_{i_{1}} \otimes \dots \otimes e_{i_{n}}\}_{i_{1}, \dots, i_{n}\in\{1, \dots, d\}}$ forms a basis of $E^{\otimes n}$. Therefore $S(X)$ can be rewritten as follows:
\begin{eqnarray*}
S(X) = 1 + \sum_{n=1}^{\infty} \sum_{\underset{\in \{1, \dots, d\}} {i_{1}, \dots, i_{n}}} \left(\underset{\underset{u_{1},...,u_{n}\in J}{u_{1}<...<u_{n}}}{\int \cdots \int
}dX_{u_{1}}^{(i_{1})}dX_{u_{2}}^{(i_{2})} \ldots  dX_{u_{n}}^{(i_{n})}\right) e_{i_{1}} \otimes e_{i_{2}}\cdots \otimes e_{i_{n}}.
\end{eqnarray*}
The signature of a path can be simply regarded as a formal infinite sum of non-commutative tensor products, and the coefficient of each monomial is determined by its corresponding coordinate iterated integral. For every multi-index $I = (i_{1}, \dots, i_{n})$, denote by $\mathbf{X}^{I}$ the following iterated integral of $X$ indexed by $I$, i.e.
\begin{equation*}
\mathbf{X}^{I} = \underset{\underset{u_{1},...,u_{n}\in J}{u_{1}<...<u_{n}}}{\int \cdots \int
}dX_{u_{1}}^{(i_{1})}dX_{u_{2}}^{(i_{2})} \ldots  dX_{u_{n}}^{(i_{n})}.
\end{equation*}
\end{remark}
The first property is Chen's identity (Theorem \ref{ChenIdentity}), which asserts that the signature of the concatenation of two paths is the tensor product of the signature of each path.
\begin{definition}
Let $X:[0,s]\longrightarrow \emph{E}$ and $Y:[s,t]\longrightarrow \emph{E}$
be two continuous paths. Their concatenation is the path $X\ast Y$ defined
by
\begin{equation*}
(X\ast Y)_{u}=\left\{
\begin{array}{ll}
X_{u}, & u\in \left[ 0,s\right]; \\
X_{s}+Y_{u}-Y_{s}, & u\in \left[ s,t\right],%
\end{array}%
\right.
\end{equation*}
where $0 \leq s \leq t$.
\end{definition}
\begin{theorem}[Chen's identity]\label{ChenIdentity}
Let $X:[0,s]\longrightarrow \emph{E}$ and $Y:[s,t]\longrightarrow \emph{E}$
be two continuous paths with finite $1$-variation. Then
\begin{equation}
S(X\ast Y)=S(X)\otimes S(Y),
\end{equation}
where $ 0 \leq s \leq t$.
\end{theorem}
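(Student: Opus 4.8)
The plan is to verify the identity level by level in the tensor algebra. Writing $S(X)=(1,X^{1},X^{2},\dots)$, $S(Y)=(1,Y^{1},Y^{2},\dots)$ and $S(X\ast Y)=(1,Z^{1},Z^{2},\dots)$, the definition of $\otimes$ on $T((E))$ reduces the claim to showing that for every $n\ge 1$,
\[
Z^{n}=\sum_{k=0}^{n}X^{k}\otimes Y^{\,n-k},
\]
with the convention $X^{0}=Y^{0}=\mathbf{1}$. Note first that $X\ast Y$ is continuous at $s$ by construction and has finite $1$-variation on $[0,t]$, so each $Z^{n}$ is a well-defined Young iterated integral over the ordered simplex $\Delta_{n}=\{(u_{1},\dots,u_{n}):0\le u_{1}<\dots<u_{n}\le t\}$.

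The key combinatorial observation is that $\Delta_{n}$ decomposes, up to a set of measure zero (the locus where some $u_{i}=s$), into the disjoint union over $k=0,\dots,n$ of the product simplices $\{0\le u_{1}<\dots<u_{k}\le s\}\times\{s\le u_{k+1}<\dots<u_{n}\le t\}$: since the $u_{i}$ are increasing, the requirement that exactly $k$ of them lie in $[0,s]$ forces them to be precisely the first $k$ coordinates. On the first block the increments of $X\ast Y$ coincide with those of $X$, and on the second block with those of $Y$.

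Hence, using a Fubini-type factorisation of the integral over a product of simplices, the contribution of the $k$-th piece is precisely
\[
\left(\int_{0\le u_{1}<\dots<u_{k}\le s} dX_{u_{1}}\otimes\dots\otimes dX_{u_{k}}\right)\otimes\left(\int_{s\le u_{k+1}<\dots<u_{n}\le t} dY_{u_{k+1}}\otimes\dots\otimes dY_{u_{n}}\right)=X^{k}\otimes Y^{\,n-k}.
\]
Summing over $k$ yields the displayed identity and hence the theorem. I expect the main obstacle to be the analytic bookkeeping around Young integration: justifying the additivity of the iterated integral across the break point $s$, the factorisation of the integral over a product simplex into a tensor product of lower-order signatures (essentially an approximation argument with Riemann--Stieltjes/Young sums together with a continuity/dominated-convergence estimate), and checking that the diagonal-type boundary sets where $u_{i}=s$ contribute nothing.

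An alternative, if one prefers to avoid the simplex geometry, is to fix $s$, regard $r\mapsto S_{[0,r]}(X\ast Y)$ and $r\mapsto S_{[0,s]}(X)\otimes S_{[s,r]}(Y)$ as $T((E))$-valued functions of the upper endpoint $r\in[s,t]$, observe that both satisfy the same integral equation $M_{r}=M_{s}+\int_{s}^{r}M_{u}\otimes d(X\ast Y)_{u}$ driven by the same path with the same initial value $S_{[0,s]}(X)\otimes\mathbf{1}=S_{[0,s]}(X)$ at $r=s$ (using that $d(X\ast Y)_{u}=dY_{u}$ for $u>s$), and conclude by a uniqueness argument for such linear equations. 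This reduces the problem to a single standard fact about signatures on one interval plus uniqueness, which may be the cleaner route.
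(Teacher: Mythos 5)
The paper does not prove this theorem; it simply cites \cite{RoughPaths}, so there is no internal proof to compare against. Your first argument is the standard simplex-decomposition proof given in that reference: split $\Delta_n$ into the product simplices indexed by how many coordinates fall in $[0,s]$, use that the increments of $X\ast Y$ agree with those of $X$ before $s$ and of $Y$ after $s$, and factorise; this is correct in outline, and the one point worth making explicit is that for a continuous path of finite $1$-variation the induced Stieltjes measure has no atoms, so the boundary slices $\{u_i=s\}$ genuinely contribute nothing. Your alternative via uniqueness of solutions to the linear integral equation $M_r=M_s+\int_s^r M_u\otimes d(X\ast Y)_u$ is also a standard and valid route.
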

The proof can be found in \cite{RoughPaths}. \newline
Let $\{e_{i}^{*}\}_{i=1}^{d}$ be a basis of the dual space $E^{*}$. Then for every $n \in \mathbb{N}$, $\{e_{i_{1}}^{*} \otimes \cdots \otimes  e_{i_{n}}^{*}\}$ it can be naturally extended to $(E^{*})^{\otimes n}$ by identifying the basis $\left( e_{I} = e_{i_{1}}^{*} \otimes \dots \otimes e_{i_{n}}^{*} \right)$ as
\begin{eqnarray*}
\langle e_{i_{1}}^{*}\otimes \dots \otimes e_{i_{n}}^{*}, e_{j_{1}}\otimes \dots \otimes j_{i_{n}}\rangle = \delta_{i_{1}, j_{1}}\dots \delta_{i_{n}, j_{n}}.
\end{eqnarray*}
The linear action of $(E^{*})^{\otimes n}$ on $E^{\otimes n}$ extends naturally to a linear mapping $(E^{*})^{\otimes n} \rightarrow T((E))^{*}$ defined by
\begin{eqnarray*}
e_{I}(\mathbf{a})= e_{I}^{*}(a_{n}),
\end{eqnarray*}
where $I = (i_{1}, \dots, i_{n})$. \\
Hence the linear forms $e_{I}^{*}$, as $I$ span the set of finite words in the letters $1, \dots, d$ form a basis of $T(E^{*})$. Let $T((E))^{*}$ denote the space of linear forms on $T((E))$ induced by $T(E^{*})$. Let us consider a word $I = (i_{1}, \dots, i_{n})$, where $i_{1}, \dots, i_{n} \in \{1, \dots, d\}$. Define $\pi^{I}$ as $e_{I}^{*}$ restricting the domain to the range of the signatures, denoted by $S(\mathcal{V}^{1}[0, T], E)$, in formula
\begin{eqnarray*}
\pi^{I}(S(X)) = e_{I}^{*}(S(X)),
\end{eqnarray*}
where $X$ is any $E$-valued continuous path of bounded variation.\\
For any two words $I$ and $J$, the pointwise product of two linear forms $\pi^{I}$ and $\pi^{J}$ as real valued functions is a quadratic form on $S(\mathcal{V}^{1}[0, T], E)$, but it is remarkable that it is still a linear form, which is stated in Theorem \ref{shuffle_theorem}. Let us introduce the definition of the shuffle product.
\begin{definition}
We define the set $S_{m,n}$ of $(m, n)$ shuffles to be the subset of permutation in the symmetric group $S_{m+n}$ defined by
\begin{eqnarray*}
S_{m,n} = \{ \sigma \in S_{m+n}: \sigma(1) < \dots < \sigma(m), \sigma(m+1) < \dots < \sigma(m+n)\}.
\end{eqnarray*}
\end{definition}
\begin{definition}
The shuffle product of $\pi^{I}$ and $\pi^{J}$ denoted by $\pi^{I} \shuffle \pi^{J}$ defined as follows:
\begin{eqnarray*}
\pi^{I} \shuffle \pi^{J}  = \sum_{\sigma \in S_{m, n}} \pi^{(k_{\sigma^{-1}(1)}, \dots, k_{\sigma^{-1}(m+n)})},
\end{eqnarray*}
where $I = (i_{1}, i_{2}, \cdots, i_{n}), J =( j_{1},j_{2},\cdots, j_{m})$ and $(k_{1}, \dots, k_{m+n}) =( i_{1}, \cdots, i_{n}, j_{1},$\\
$\cdots, j_{m})$.
\end{definition}
\begin{theorem}[Shuffle Product Property]\label{shuffle_theorem}
Let $X$ be a path of bounded variation. Let $I$ and $J$ be two arbitrary indices. The following identity holds:
\begin{eqnarray*}
\pi^{I}( S(X)) \pi^{J}(S(X)) = (\pi^{I} \shuffle \pi^{J})(S(X)).
\end{eqnarray*}
\end{theorem}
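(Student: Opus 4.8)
The approach is to reduce the identity to a statement about iterated integrals and prove it by induction on the total length $m+n$ of the two words. Write $I = (i_1, \dots, i_n)$ and $J = (j_1, \dots, j_m)$. By the remark following the definition of the signature, $\pi^I(S(X)) = \mathbf{X}^I$ and $\pi^J(S(X)) = \mathbf{X}^J$, so the claim is equivalent to
\begin{equation*}
\mathbf{X}^I \, \mathbf{X}^J = \sum_{\sigma \in S_{m,n}} \mathbf{X}^{(k_{\sigma^{-1}(1)}, \dots, k_{\sigma^{-1}(m+n)})},
\end{equation*}
where $(k_1, \dots, k_{m+n}) = (i_1, \dots, i_n, j_1, \dots, j_m)$. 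The product on the left is a product of two iterated integrals over the same simplex-type domains in $J$; the idea is to expand it as an integral over the product domain and then decompose that product domain according to the relative order of the two families of integration variables.

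The key step is the recursive decomposition. Fixing the upper endpoint $t$ of the interval $J = [0,t]$, define the partial iterated integrals $\mathbf{X}^I_s$ obtained by integrating over $u_1 < \dots < u_n < s$, and similarly $\mathbf{X}^J_s$, so that $\mathbf{X}^I_s$ and $\mathbf{X}^J_s$ are themselves paths of bounded variation in $s$. Then one has the elementary product rule
\begin{equation*}
d\!\left( \mathbf{X}^I_s \, \mathbf{X}^J_s \right) = \mathbf{X}^{(i_1, \dots, i_{n-1})}_s \, dX^{(i_n)}_s \cdot \mathbf{X}^J_s + \mathbf{X}^I_s \cdot \mathbf{X}^{(j_1, \dots, j_{m-1})}_s \, dX^{(j_m)}_s,
\end{equation*}
which is just integration by parts / the Leibniz rule for Young integrals. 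Integrating this from $0$ to $t$ and applying the inductive hypothesis to each of the two lower-length products $\mathbf{X}^{(i_1,\dots,i_{n-1})}\mathbf{X}^J$ and $\mathbf{X}^I \mathbf{X}^{(j_1,\dots,j_{m-1})}$ inside the integrals, one obtains a sum of iterated integrals. The combinatorial content is then to check that the shuffles of $(i_1,\dots,i_{n-1})$ with $J$ followed by a final letter $i_n$, together with the shuffles of $I$ with $(j_1,\dots,j_{m-1})$ followed by a final letter $j_m$, enumerate exactly the shuffle set $S_{m,n}$ — this is the standard recursion $u a \shuffle v b = (u \shuffle vb) a + (ua \shuffle v) b$ for the shuffle product on words, and it matches the definition of $S_{m,n}$ by splitting according to whether $\sigma^{-1}(m+n) = m+n$ (i.e. the last slot is filled by $j_m$) or not. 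The base cases, where one of the words is empty, are immediate since $\mathbf{X}^{\emptyset} = 1$ and the empty shuffle set contribution is just $\mathbf{X}^{\text{other word}}$.

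The main obstacle is bookkeeping rather than analysis: one must set up the indexing of the shuffle permutations carefully enough that the recursion on words translates cleanly into the recursion on the $S_{m,n}$ and that the final-letter split is seen to be a genuine partition of $S_{m,n}$. The analytic input — that the product of two iterated Young integrals decomposes over the subdivided product domain, and that all manipulations (Fubini, integration by parts) are legitimate for paths of finite $1$-variation — is standard and I would simply cite the Young integration framework from \cite{RoughPaths}. An alternative, essentially equivalent, route is to prove it directly for piecewise linear paths (where everything is a Riemann sum and the domain decomposition is a finite combinatorial identity) and then pass to the general case by the density of piecewise linear paths and the continuity of the signature in $1$-variation; I would mention this as a remark but carry out the inductive argument as the main proof.
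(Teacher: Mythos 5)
The paper does not actually prove this theorem: as with Chen's identity, it is stated as a known result and implicitly deferred to the rough-path literature, so there is no in-paper argument to compare yours against. Your plan is the standard proof and it is sound. For a path of finite $1$-variation the partial iterated integrals $s \mapsto \mathbf{X}^{I}_{s}$ are themselves of bounded variation, the Stieltjes/Young product rule carries no correction term, and the induction on $|I|+|J|$ via $d\bigl(\mathbf{X}^{I}_{s}\mathbf{X}^{J}_{s}\bigr)$ combined with the word recursion $ua \shuffle vb = (u \shuffle vb)\,a + (ua \shuffle v)\,b$ is exactly how the identity is established in \cite{RoughPaths}. Two points to make explicit if you write it out in full. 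First, the inductive claim must be stated for every upper limit $s$ (i.e.\ for the path restricted to $[0,s]$), not only at the right endpoint, since the lower-order products appear inside the outer integral; your definition of $\mathbf{X}^{I}_{s}$ handles this, but it should be said. Second, the notation needs care: the letter $J$ is overloaded (time interval versus second word), and the paper's own definition of $S_{m,n}$ pairs the constraint on the first $m$ slots with a concatenated word whose first $n$ letters come from $I$, so the indexing must be reconciled before your final-letter split (last slot filled by $i_{n}$ versus by $j_{m}$) is literally seen to partition $S_{m,n}$. Neither point is a gap in the idea; the proposal is correct.
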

\section{A discrete sampled path and the signature of its lead-lag transformation}
In the following we constrain our discussion on paths observed at a finite number of time stamps and take value in $E:= \mathbb{R}^{d}$.
\subsection{The discrete sampled path and the lead-lag transformation}
Let $\{x_{n}\}_{n = 1}^{L}$ be an increment process, where $x_{n} \in E$. (You can think of it as a return process.) Let $\mathbf{X} : = \{X_{n}\}_{n = 0}^{L}$ denote the corresponding partial sum process of $\{x_{n}\}_{n = 0}^{L-1}$. (It can be thought as a price process.) Mathematically, $\mathbf{X}$ is defined as follows:
\begin{eqnarray*}
X_{0} &=& 0;\\
X_{n+1} &=& \sum_{i  = 1}^{n} x_{i}, \text{ if }n = 1, \dots, L. 
\end{eqnarray*}  
Now let us introduce the lead-lag transformation associated with a $d$-dimensional stream $ \mathbb{X}$ (\cite{flint2013discretely}).
\begin{definition}[Lead-Lag Transformation]
Let $\mathbb{X}:= \{X_{n}\}_{n = 0}^{L}$ be a $d$-dimensional discrete sampled path. The lead-lag transformation associated with $\mathbf{X}$ is a $2d$-dimensional path which is obtained by linear interpolation of $\mathbf{X}: = \{X_{n}\}_{n = 0}^{2L}$,  where $\mathbf{X}^{(i)}_{0} =X_{0}^{(i)}$ and $ \mathbf{X}^{(i)}_{2n-1} = X_{n}^{(i)}$ and for every $n \in \{0, \dots, L-1\}$ and for every $i \in \{1, \dots, d\}$, 
\begin{eqnarray*}
\mathbb{X}^{(i)}_{2n+2} &=& \mathbb{X}^{(i)}_{2n+1} = X_{n+1}^{(i)} \\
\mathbb{X}^{(i+d)}_{2n} &=& \mathbb{X}^{(i+d)}_{2n+1} =X_{n}^{(i)}.
\end{eqnarray*} 
Let $\mathcal{L}$ denote the lead-lag transformation operator. 
\end{definition}
The lead-lag process $\mathbb{X}$ is in the form of the following:\\
\begin{table}
\center
\begin{tabular}{ c c c c c c }
   $\mathbb{X}_{0}$, & $\mathbb{X}_{1}$, &$ \mathbb{X}_{2}$, &...& $\mathbb{X}_{2n-1}$, &$ \mathbb{X}_{2n}$. \\
  $ \vert\vert $ &  $ \vert\vert $& $ \vert\vert $ & & $ \vert\vert $&  $ \vert\vert $\\
$ \left( \begin{array}{c}
X_{0}^{(1)}  \\
X_{0}^{(2)}  \\
\vdots\\
X_{0}^{(d)} \\
X_{0}^{(1)}\\
X_{0}^{(2)} \\
\vdots\\
X_{0}^{(d)} \\ \end{array} \right), $ &  $\left( \begin{array}{c}
X_{1}^{(1)}  \\
X_{1}^{(2)}  \\
\vdots\\
X_{1}^{(d)} \\
X_{0}^{(1)}\\
X_{0}^{(2)} \\
\vdots\\
X_{0}^{(d)} \\ \end{array} \right), $& $ \left( \begin{array}{c}
X_{1}^{(1)}  \\
X_{1}^{(2)}  \\
\vdots\\
X_{1}^{(d)} \\
X_{1}^{(1)}\\
X_{1}^{(2)} \\
\vdots\\
X_{1}^{(d)} \\ \end{array} \right)$ &$\cdots$ & $\left( \begin{array}{c}
X_{n}^{(1)}  \\
X_{n}^{(2)}  \\
\vdots\\
X_{n}^{(d)} \\
X_{n-1}^{(1)}\\
X_{n-1}^{(2)} \\
\vdots\\
X_{n-1}^{(d)} \\ \end{array} \right),  $&  $ \left( \begin{array}{c}
X_{n}^{(1)}  \\
X_{n}^{(2)}  \\
\vdots\\
X_{n}^{(d)} \\
X_{n}^{(1)}\\
X_{n}^{(2)} \\
\vdots\\
X_{n}^{(d)} \\ \end{array}\right) $\\
\end{tabular}
\end{table}
\begin{lemma}[The multiplicative of the lead-lag transformation]
For any two discrete sampled path $\mathbf{X} =  \{X_{n}\}_{n = 0}^{L_{1}}$ and $\mathbf{Y} =  \{Y_{n}\}_{n = 0}^{L_{2}}$
\begin{eqnarray*}
\mathcal{L}(\mathbf{X} * \mathbf{Y} ) = \mathcal{L}(\mathbf{X}) * \mathcal{L}(\mathbf{Y} ),
\end{eqnarray*}
where $\mathbf{X} * \mathbf{Y}$ denote the concatenation of two discrete sampled path, i.e.
\begin{eqnarray*}
(\mathbf{X} * \mathbf{Y})_{n} = \begin{cases} X_{n} &\mbox{if } n \leq L_{1}-1 \\ 
X_{L_{1}} - Y_{0} + Y_{n-L_{1}} & \mbox{if } L_{1} \leq n \leq L_{1} + L_{2}. \end{cases}
\end{eqnarray*}
\end{lemma}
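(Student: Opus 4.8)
The plan is to reduce the identity to a coordinate-wise check at the sampling times, using that both sides are piecewise-linear interpolations of discrete $2d$-dimensional streams living on the same integer time grid. First I would unwind the recursions in the definition of $\mathcal{L}$ to obtain the closed form: for $\mathbf{X}=\{X_n\}_{n=0}^{L}$, the stream $\mathcal{L}(\mathbf{X})$ is the linear interpolation of the $2d$-dimensional stream whose $k$-th point, for $k=0,\dots,2L$, has ``lead'' block $X_{\lceil k/2\rceil}$ (its first $d$ coordinates) and ``lag'' block $X_{\lfloor k/2\rfloor}$ (its last $d$ coordinates).

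Next I would check that the two sides are streams on the same grid. Since $\mathbf{X}*\mathbf{Y}$ has $L_1+L_2$ steps, $\mathcal{L}(\mathbf{X}*\mathbf{Y})$ is indexed by $k\in\{0,\dots,2(L_1+L_2)\}$; and $\mathcal{L}(\mathbf{X})$, $\mathcal{L}(\mathbf{Y})$ have $2L_1$ and $2L_2$ steps, so their concatenation carries the same index set. Because linear interpolation commutes with concatenation once the breakpoints are aligned, it is enough to prove that the two underlying discrete streams agree at every such $k$.

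Then I would split along the seam at $k=2L_1$. For $k\le 2L_1$ both $\lceil k/2\rceil$ and $\lfloor k/2\rfloor$ are $\le L_1$, so $(\mathbf{X}*\mathbf{Y})$ agrees with $\mathbf{X}$ there and $\mathcal{L}(\mathbf{X}*\mathbf{Y})_k=\mathcal{L}(\mathbf{X})_k=(\mathcal{L}(\mathbf{X})*\mathcal{L}(\mathbf{Y}))_k$ by the closed form. For $k=2L_1+j$ with $1\le j\le 2L_2$, the lead index $L_1+\lceil j/2\rceil$ lies in the $\mathbf{Y}$-block, and so does the lag index $L_1+\lfloor j/2\rfloor$ for $j\ge 2$; for $j=1$ the lag index is $L_1$, but there $(\mathbf{X}*\mathbf{Y})_{L_1}=X_{L_1}=X_{L_1}-Y_0+Y_0$, so the single expression $(\mathbf{X}*\mathbf{Y})_{L_1+m}=X_{L_1}-Y_0+Y_m$ is valid for all relevant $m$. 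Substituting gives $\mathcal{L}(\mathbf{X}*\mathbf{Y})_{2L_1+j}=(X_{L_1}-Y_0+Y_{\lceil j/2\rceil},\,X_{L_1}-Y_0+Y_{\lfloor j/2\rfloor})$, while the concatenation rule for the streams $\mathcal{L}(\mathbf{X}),\mathcal{L}(\mathbf{Y})$ gives $\mathcal{L}(\mathbf{X})_{2L_1}-\mathcal{L}(\mathbf{Y})_0+\mathcal{L}(\mathbf{Y})_j=(X_{L_1},X_{L_1})-(Y_0,Y_0)+(Y_{\lceil j/2\rceil},Y_{\lfloor j/2\rfloor})$, the same vector; this covers all indices.

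The only real obstacle is the index book-keeping at the seam: one has to notice that the lag coordinate of the concatenated stream still points at the terminal value $X_{L_1}$ of the first path while the lead coordinate has already entered the second, and to verify that the shift built into the concatenation of the two lead-lag streams (subtraction of $\mathcal{L}(\mathbf{Y})_0=(Y_0,Y_0)$) reproduces this exactly. Once the closed form of $\mathcal{L}$ is in hand, everything else is a mechanical substitution.
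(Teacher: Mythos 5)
Your proof is correct: the closed form $\mathcal{L}(\mathbf{X})_k=(X_{\lceil k/2\rceil},X_{\lfloor k/2\rfloor})$ matches the paper's definition, the seam bookkeeping at $k=2L_1$ (including the observation that $(\mathbf{X}*\mathbf{Y})_{L_1+m}=X_{L_1}-Y_0+Y_m$ holds for $m=0$ as well) is exactly the point that needs checking, and the two sides agree at every grid point. The paper states this lemma without any proof, so there is nothing to compare against; your direct coordinate-wise verification is the natural argument and fills that gap.
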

\subsection{The signature of the lead-lag transformation}
Let us define the signature of the discrete sampled stream, and discuss the relevant properties. 
\begin{definition}[The signature representation of a discrete sampled stream] Let $\mathbf{X}$ be a discrete sampled path in $E$ and $\mathbb{X}$ is the lead-lag transformation of $\mathbf{X}$. The signature of $\mathbf{X}$ is defined to be the signature of $\mathbb{X}$, denoted by $S(\mathbb{X})$. Let $S_{d}(\mathbb{X})$ denote the truncated signature of $\mathbb{X}$ up to degree $d$. Let $\mathcal{DS}$ denote the range of signatures of the lead-lag transformation of discrete sampled paths in $E$.
\end{definition}
\begin{lemma}[Chen's Identity for Discrete Sampled Path]
For any two discrete sampled path $\mathbf{X} =  \{X_{n}\}_{n = 0}^{L_{1}}$ and $\mathbf{Y} =  \{Y_{n}\}_{n = 0}^{L_{2}}$.
\begin{eqnarray*}
S(\mathcal{L}(\mathbf{X} * \mathbf{Y} ) ) = S(\mathcal{L}(\mathbf{X}) )\otimes S(\mathcal{L}( \mathbf{Y}) ).
\end{eqnarray*}
\end{lemma}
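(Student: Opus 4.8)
The statement is designed to follow by applying the signature map $S$ to both sides of the preceding lemma and then invoking Chen's identity (Theorem~\ref{ChenIdentity}). The only work is to check that the two notions of concatenation line up: the ``$*$'' occurring on the right-hand side of the multiplicativity lemma must be exactly the continuous-path concatenation for which Theorem~\ref{ChenIdentity} is stated, and the lead-lag paths involved must satisfy the hypotheses of that theorem (continuity and finite $1$-variation).

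\textbf{Step 1: the lead-lag paths are admissible inputs to Chen's identity.} For any discrete sampled path $\mathbf{Z} = \{Z_n\}_{n=0}^{M}$, the path $\mathcal{L}(\mathbf{Z})$ is by construction obtained by linear interpolation of the finitely many points $\mathbb{Z}_0, \mathbb{Z}_1, \dots, \mathbb{Z}_{2M}$ in $\mathbb{R}^{2d}$; hence it is continuous and piecewise linear, in particular of finite $1$-variation. Thus $S(\mathcal{L}(\mathbf{Z}))$ is well defined, and $\mathcal{L}(\mathbf{X})$, $\mathcal{L}(\mathbf{Y})$ are legitimate arguments for Theorem~\ref{ChenIdentity}.

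\textbf{Step 2: match up the concatenations.} By the preceding lemma (multiplicativity of $\mathcal{L}$),
\[
\mathcal{L}(\mathbf{X} * \mathbf{Y}) = \mathcal{L}(\mathbf{X}) * \mathcal{L}(\mathbf{Y}),
\]
so it remains to read the right-hand side as the continuous-path concatenation of the Definition preceding Theorem~\ref{ChenIdentity}. Concretely, one places $\mathcal{L}(\mathbf{Y})$ on the time interval immediately following that of $\mathcal{L}(\mathbf{X})$ and translates it so that it starts at the terminal value of $\mathcal{L}(\mathbf{X})$, which in $\mathbb{R}^{2d}$ equals $(X_{L_1}^{(1)},\dots,X_{L_1}^{(d)},X_{L_1}^{(1)},\dots,X_{L_1}^{(d)})$, matching the starting value of the translated $\mathcal{L}(\mathbf{Y})$, namely $(Y_0^{(1)},\dots,Y_0^{(d)},Y_0^{(1)},\dots,Y_0^{(d)})$ shifted by $X_{L_1} - Y_0$. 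Since the signature is invariant under reparametrisation of the time axis, the actual choice of time stamps is irrelevant, and the identification of the two concatenations is just a finite, mechanical check against the explicit formulas for $(\mathbf{X}*\mathbf{Y})_n$ and for $\mathbb{X}^{(i)}_{2n+1}, \mathbb{X}^{(i+d)}_{2n}$ in the two definitions.

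\textbf{Step 3: apply Chen's identity and conclude.} Taking signatures in Step~2 and then applying Theorem~\ref{ChenIdentity} with the paths $\mathcal{L}(\mathbf{X})$ and $\mathcal{L}(\mathbf{Y})$ gives
\[
S(\mathcal{L}(\mathbf{X} * \mathbf{Y})) = S\bigl(\mathcal{L}(\mathbf{X}) * \mathcal{L}(\mathbf{Y})\bigr) = S(\mathcal{L}(\mathbf{X})) \otimes S(\mathcal{L}(\mathbf{Y})),
\]
which is the claim. I do not expect a genuine obstacle here: the entire content is the bookkeeping in Step~2, i.e. verifying that reading the discrete concatenation through $\mathcal{L}$ produces precisely the continuous concatenation of the two individual lead-lag paths, with no spurious constant segment created or destroyed at the join; everything else is a direct appeal to the two results already established.
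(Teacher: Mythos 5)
Your proof is correct and follows exactly the route the paper intends: the paper states this lemma without proof, immediately after the multiplicativity lemma $\mathcal{L}(\mathbf{X}*\mathbf{Y})=\mathcal{L}(\mathbf{X})*\mathcal{L}(\mathbf{Y})$, precisely so that it follows by applying Theorem~\ref{ChenIdentity} to the piecewise-linear (hence finite $1$-variation) lead-lag paths. Your Step~2 bookkeeping, checking that the discrete concatenation read through $\mathcal{L}$ coincides (up to reparametrisation, under which the signature is invariant) with the continuous concatenation, is the only substantive point and you handle it correctly.
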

\begin{definition}[Additive functional on $\mathcal{DS}$] Let $K$ be a linear form on $T((E))$. We say that $K$ is additive in $\mathcal{DS}$ if and only if for every $S(\mathbb{X}), S(\mathbb{Y}) \in \mathcal{DS}$,  it follows that
\begin{eqnarray*}
K(S(\mathbb{X}*\mathbb{Y})) = K(S(\mathbb{X})) +  K(S(\mathbb{Y})). 
\end{eqnarray*}
\end{definition}
For convenience, let us adopt the following notation
\begin{definition}
Fix any positive integer $p$. Let $\mathcal{K}_{I}^{(p)}$ denote the set of the linear forms on $T((E))$ such that it can be written as 
\begin{eqnarray*}
\sum_{\vert J \vert = p, J = (J_{1}, I)} C_{J} \pi^{(J)}
\end{eqnarray*}
where $C_{J} $ are all constants and the summation is taken over all $J$ such that $J$ is of length $p$ and ended in the substring $I$.
\end{definition}
\section{One Dimensional Stream Case}
Let us focus on one dimensional case, and we will show that the signature of $\mathbb{X}$ contains rich information of the path  $\mathbf{X}$ and it is a good basis function to represent the standard statistic, for example, the empirical moments of increments of $\mathbf{X}$ (Theorem \ref{pmoment}). Let us start with discussion on properties of the signature of $\mathbb{X}$. \\
By Chen's identity and simple calculation, the signature of a path in $\mathcal{DS}$ can be given so explicit as follows:
\begin{lemma}[Signature of one-dimensional discrete path]
For any $\mathbb{X} \in \mathcal{DS}$, and $\{x_{i}\}_{i = 1}^{L}$ is the increment process associated with $\mathbb{X} $, then
\begin{eqnarray*}
S(\mathbb{X}) = \bigotimes_{i = 1}^{L} \exp(x_{i} e_{1}) \otimes \exp(x_{i}e_{2})
\end{eqnarray*}
\end{lemma}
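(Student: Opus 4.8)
The plan is to reduce everything to two elementary facts and then invoke Chen's identity. First, the signature of a single straight line segment from $a$ to $b$ in $E$ equals the tensor exponential $\exp(b-a)$: for the linear path $\gamma_t = a + t(b-a)$, $t\in[0,1]$, one has $d\gamma_t = (b-a)\,dt$, so the $n$-fold iterated integral over the simplex $u_1<\cdots<u_n$ equals $(b-a)^{\otimes n}/n!$, and summing over $n$ gives $\exp(b-a)$. In particular a segment that moves only in the first coordinate by an amount $c$ has signature $\exp(c\,e_1)$, and one that moves only in the second coordinate by $c$ has signature $\exp(c\,e_2)$; a degenerate ($c=0$) segment contributes $\exp(0)=\mathbf{1}$, consistent with the formula.

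Second, in the one-dimensional case ($d=1$) the lead-lag transform $\mathbb{X}$ is the piecewise linear interpolation through the $2L+1$ vertices which, reading off the defining relations of $\mathcal{L}$, are $\mathbb{X}_{2i-2}=(X_{i-1},X_{i-1})$, $\mathbb{X}_{2i-1}=(X_i,X_{i-1})$ and $\mathbb{X}_{2i}=(X_i,X_i)$ for $i=1,\dots,L$. Thus the segment $\mathbb{X}_{2i-2}\to\mathbb{X}_{2i-1}$ changes only the first (``lead'') coordinate, by $X_i-X_{i-1}=x_i$, while $\mathbb{X}_{2i-1}\to\mathbb{X}_{2i}$ changes only the second (``lag'') coordinate, again by $x_i$. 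Hence $\mathbb{X}$ is exactly the concatenation, in the order $i=1,\dots,L$, of the lead segment (signature $\exp(x_i e_1)$) followed by the lag segment (signature $\exp(x_i e_2)$).

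It remains to apply Chen's identity. Using Theorem~\ref{ChenIdentity} repeatedly on the $2L$ segments (equivalently, the discrete Chen identity established above, or an induction on $L$ whose base case $L=1$ is the two-segment staircase $(0,0)\to(x_1,0)\to(x_1,x_1)$), the signature of the whole path is the ordered tensor product of the signatures of the segments:
$$S(\mathbb{X}) = \bigotimes_{i=1}^{L}\big(\exp(x_i e_1)\otimes\exp(x_i e_2)\big) = \bigotimes_{i=1}^{L}\exp(x_i e_1)\otimes\exp(x_i e_2),$$
as claimed.

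The argument is essentially bookkeeping, and the only place that needs care is the second step: untangling the index conventions in the definition of $\mathcal{L}$ to confirm that consecutive vertices of $\mathbb{X}$ differ in exactly one coordinate (so that each elementary Chen factor collapses to a pure exponential rather than the signature of a genuinely two-dimensional segment) and that the relevant increments are precisely the $x_i$. Once that axis-parallel staircase structure is in hand, Chen's identity does all the remaining work.
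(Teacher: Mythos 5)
Your argument is correct and is exactly the route the paper intends: the paper offers no written proof of this lemma beyond the remark that it follows ``by Chen's identity and simple calculation,'' and your two ingredients (the signature of a linear segment is the tensor exponential of its increment, and the one-dimensional lead-lag path is the axis-parallel staircase whose $2L$ segments alternate between increments $x_i e_1$ and $x_i e_2$) together with repeated application of Theorem~\ref{ChenIdentity} are precisely that calculation. Your careful unwinding of the index conventions in the definition of $\mathcal{L}$ is consistent with the table displayed in the paper, so nothing is missing.
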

\begin{lemma}\label{LemmaSwitchingIndex}
For every index $I$ ending in $2$ and any positive integer $p$,  there exists $K \in \mathcal{K}_{2}^{ (\vert I\vert + p)}$, for any $\mathbb{X}_{L} \in \mathcal{DS}$, such that
\begin{eqnarray*}
\pi^{(I, M_{p})}(S(\mathbb{X}_{L})=  K (S(\mathbb{X}_{L})).
\end{eqnarray*}
where $M_{p}$ is $p$ copies of $1$. \\
For every index $I$ ending in $1$ and any positive integer $p$,  there exists $K \in \mathcal{K}_{1}^{ (\vert I\vert + p)}$, for any $\mathbb{X}_{L} \in \mathcal{DS}$, such that
\begin{eqnarray*}
\pi^{(I, K_{p})}(S(\mathbb{X}_{L})=  K (S(\mathbb{X}_{L})).
\end{eqnarray*}
where $M_{p} = (1, \dots, 1)$, i.e. $p$ copies of $1$. 
\end{lemma}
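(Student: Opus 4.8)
The plan is to derive everything from two facts already in hand: the explicit product formula $S(\mathbb{X})=\bigotimes_{i=1}^{L}\exp(x_{i}e_{1})\otimes\exp(x_{i}e_{2})$ for $\mathbb{X}\in\mathcal{DS}$ (the preceding lemma), and the shuffle product property (Theorem~\ref{shuffle_theorem}). Reading off the degree-one component of the product formula shows that
\begin{eqnarray*}
\pi^{(1)}(S(\mathbb{X}))=\pi^{(2)}(S(\mathbb{X}))=\sum_{i=1}^{L}x_{i}\qquad\text{for every }\mathbb{X}\in\mathcal{DS},
\end{eqnarray*}
so that, combining this with Theorem~\ref{shuffle_theorem}, for every word $W$ in the letters $\{1,2\}$ and every $\mathbb{X}\in\mathcal{DS}$,
\begin{eqnarray*}
\big(\pi^{(1)}\shuffle\pi^{(W)}\big)(S(\mathbb{X}))=\pi^{(1)}(S(\mathbb{X}))\,\pi^{(W)}(S(\mathbb{X}))=\pi^{(2)}(S(\mathbb{X}))\,\pi^{(W)}(S(\mathbb{X}))=\big(\pi^{(2)}\shuffle\pi^{(W)}\big)(S(\mathbb{X})).
\end{eqnarray*}
Since $\pi^{(a)}\shuffle\pi^{(W)}$ is the sum of $\pi^{(W')}$ over all words $W'$ obtained by inserting the single letter $a$ somewhere into $W$, this identity says: on $\mathcal{DS}$, the sum over all insertions of a $1$ into $W$ equals the sum over all insertions of a $2$ into $W$. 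This is the only input specific to the lead--lag construction that I would need.

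Next I would prove the first assertion by induction on $p$, with the statement quantified over all indices $I$ ending in $2$. For the base case $p=1$ I would apply the identity above with $W=I$ (say $|I|=m$) and isolate, on each side, the single term in which the new letter is appended at the very end; this rewrites $\pi^{(I,1)}$, on $\mathcal{DS}$, as $\pi^{(I,2)}$ plus a signed sum of words obtained by inserting one letter into the interior of $I$. Because $I$ ends in $2$ and none of these insertions disturbs the last slot, every word on the right-hand side has length $m+1$ and still ends in $2$; hence the right-hand side is the restriction to $\mathcal{DS}$ of an element of $\mathcal{K}_{2}^{(|I|+1)}$, which is what is claimed. For the inductive step ($p\ge 2$) I would apply the identity with $W=(I,M_{p-1})$. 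On the ``insert a $1$'' side, any insertion falling at or after the end of $I$ merely lengthens the trailing run of ones and so reproduces $(I,M_{p})$ --- this happens for exactly $p$ of the insertion positions --- while every other insertion falls inside $I$ and yields a word $(I',M_{p-1})$ with $I'$ ending in $2$ and $|I'|=m+1$. On the ``insert a $2$'' side, each resulting word is either $(I,M_{p-1},2)$, or some $(\widetilde I,M_{b})$ with $\widetilde I$ ending in $2$ and $0\le b\le p-1$. Applying the inductive hypothesis to each $(I',M_{p-1})$ and to each $(\widetilde I,M_{b})$ with $b\ge 1$ (the remaining words already being basis forms indexed by words of length $m+p$ ending in $2$), and using that $\mathcal{K}_{2}^{(m+p)}$ is a vector space, I can rearrange the identity into
\begin{eqnarray*}
p\,\pi^{(I,M_{p})}(S(\mathbb{X}))=K(S(\mathbb{X})),\qquad K\in\mathcal{K}_{2}^{(|I|+p)},
\end{eqnarray*}
and dividing by $p$ closes the induction.

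The second assertion follows by running the same argument with the letters $1$ and $2$ interchanged: the driving identity $\pi^{(1)}(S(\mathbb{X}))=\pi^{(2)}(S(\mathbb{X}))$ on $\mathcal{DS}$ is symmetric in the two letters, so the induction goes through verbatim and shows that, for $I$ ending in $1$, $\pi^{(I,N_{p})}$ agrees on $\mathcal{DS}$ with some element of $\mathcal{K}_{1}^{(|I|+p)}$, where $N_{p}$ denotes the index consisting of $p$ copies of $2$.

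The step I expect to be the real work is the combinatorial bookkeeping in the inductive step: one must verify that the target word $(I,M_{p})$ is produced with the nonzero multiplicity $p$ (so the final division is legitimate), and that every other word thrown up by either shuffle expansion either already ends in $2$ or has a strictly shorter trailing run of ones than $(I,M_{p})$, so that it is covered by an earlier case of the induction. Everything else is routine.
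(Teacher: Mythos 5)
Your proof is correct and follows essentially the same route as the paper: both rest on the identity $\pi^{(1)}(S(\mathbb{X}))=\pi^{(2)}(S(\mathbb{X}))$ on $\mathcal{DS}$, expanded through the shuffle product, together with an induction on $p$ that isolates $(I,M_{p})$ and absorbs every other word produced by the shuffle into $\mathcal{K}_{2}^{(|I|+p)}$ via the inductive hypothesis. The only difference is cosmetic: the paper shuffles $\pi^{I}$ against the full blocks $\pi^{(M_{p})}-\pi^{(K_{p})}$ so the target appears with coefficient $1$, whereas you shuffle a single letter into $(I,M_{p-1})$ and divide out the resulting multiplicity $p$.
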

\begin{proof}
First of all, let us prove that the case $p = 1$. As $I$ ends in $2$, then we can rewrite $I$ as $(J, 2)$. Since $(\pi^{(1)} - \pi^{(2)})(S(\mathbb{X}))= 0$, then
\begin{eqnarray*}
0&=& \pi^{I} (\pi^{(1)} - \pi^{(2)}) = \pi^{(J \shuffle 1,  2)} +\pi^{(I_{2}, 1)} -  \pi^{I_{2}}\shuffle  \pi^{(2)}.\\
\pi^{(I, 1)}& = &   \pi^{I}\shuffle  \pi^{(2)} - \pi^{(J \shuffle 1,  2)} \in \mathcal{K}_{2}^{\vert I \vert +p}.
\end{eqnarray*}
Then we prove this statement by induction on $p$. Let $K_{p}$ be $p$ copies of $2s$.  
\begin{eqnarray*}
0&=& \pi^{I} (\pi^{(M_{p})} - \pi^{(K_{p})}) = \pi^{(J \shuffle M_{p},  2)} +\pi^{(I \shuffle M_{p-1}, 1)} -  \pi^{I}\shuffle  \pi^{K_{p}}.
\end{eqnarray*}
Let us investigate the term $\pi^{(I \shuffle M_{p-1}, 1)}$. 
\begin{eqnarray*}
(I \shuffle M_{p-1}, 1) = (I, M_{p}) + \sum_{k = 1}^{p-1} ( J \shuffle M_{k}, 2, M_{p-k}),
\end{eqnarray*}
and thus
\begin{eqnarray*}
\pi^{(I \shuffle M_{p-1}, 1)}= \pi^{(I, M_{p})} + \sum_{k = 1}^{p-1} \pi^{(J \shuffle M_{k}, 2, M_{p-k})}.
\end{eqnarray*}
For any $k = 1, \dots, p-1$, by induction hypothesis, there exist the linear functional $G \in \mathcal{K}_{2}^{\vert I \vert + p}$ such that for any $S(\mathbb{X}) \in \mathcal{DS}$,
\begin{eqnarray*}
 \pi^{(J \shuffle M_{k}, 2, M_{p-k})}S(\mathbb{X})  = G(S(\mathbb{X})) .
\end{eqnarray*}
Therefore 
\begin{eqnarray*}
\pi^{(I, M_{p})} &=& \pi^{I}\shuffle  \pi^{K_{p}} - \pi^{(J \shuffle M_{p},  2)}  - \sum_{k = 1}^{p-1} \pi^{(J \shuffle M_{k}, 2, M_{p-k})},\\
&=& \pi^{I}\shuffle  \pi^{K_{p}} - \pi^{(J \shuffle M_{p},  2)} - G \in  \mathcal{K}_{2}^{\vert I \vert + p}.
\end{eqnarray*}
Now we complete the first part of the statement. We can use the same strategy to show he second part of the statement.
\end{proof}
\begin{remark}
Since $\pi^{M_{p}} = \pi^{K_{p}}$, Lemma \ref{LemmaSwitchingIndex} shows that for each index $I$, $\pi^{(I)}$ can be rewritten as a linear functional in $\mathcal{K}_{2}^{\vert I \vert}$.
\end{remark}
\begin{lemma}\label{Lemma3}
For any index $I = (i_{1}, \dots, i_{ n -1}, 2)$, and any $S(\mathbb{X}_{L}) \in \mathcal{DS}$,
\begin{eqnarray}
\pi^{(I, 1)}(S(\mathbb{X}_{L})) = \sum_{j = 1}^{L} \pi^{I}(S(\mathbb{X}_{j-1}))x_{j}.\label{Eqn2}
\end{eqnarray}
\end{lemma}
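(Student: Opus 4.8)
The plan is to exploit the explicit product formula for the signature of a one-dimensional discrete sampled path, namely $S(\mathbb{X}_L) = \bigotimes_{i=1}^L \exp(x_i e_1)\otimes\exp(x_i e_2)$, together with Chen's identity in the form $S(\mathbb{X}_L) = S(\mathbb{X}_{L-1})\otimes \exp(x_L e_1)\otimes\exp(x_L e_2)$. Since the index $I$ ends in $2$, the letter $1$ appended at the end of $(I,1)$ must, under the tensor-product expansion of $\pi^{(I,1)}$ evaluated on a concatenation, be "picked up" from a factor whose second coordinate $e_2$-component is trivial. The key observation is that in the block $\exp(x_L e_1)\otimes\exp(x_L e_2)$, any word ending in $1$ that uses at least one letter from this block and then reads only the $\exp(x_L e_1)$ part contributes, while mixed reads involving the trailing $e_2$ in the second half of the block contribute in a way that telescopes. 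I would therefore first write out $\pi^{(I,1)}(S(\mathbb{X}_L))$ via the coproduct as a sum over all ways of splitting the word $(I,1)$ into a prefix read against $S(\mathbb{X}_{L-1})$ and a suffix read against $\exp(x_L e_1)\otimes\exp(x_L e_2)$.

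Next I would analyze the suffix contributions from $\exp(x_L e_1)\otimes\exp(x_L e_2)$. Because this factor is a group-like element of the form $\exp(x_L e_1)\exp(x_L e_2)$, the linear functional $\pi^{(J,1)}$ evaluated on it is nonzero only for words $(J,1)$ of a very restricted shape: $J$ must consist of some number of $2$'s followed by some number of $1$'s (reading $e_1$'s then — no, reading first the $\exp(x_Le_1)$ factor then the $\exp(x_L e_2)$ factor means words are $1^a 2^b$). So a word ending in $1$ on this block is either empty suffix, or the single letter $1$ (value $x_L$), or $1^a$ for $a\ge 1$ (value $x_L^a/a!$), with no $2$'s allowed after a $1$ and none before since the word must end in $1$. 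Combined with the requirement that the prefix is a word ending in $2$ (to match against $S(\mathbb{X}_{L-1})$ meaningfully) or is exactly $I$, the only surviving split that yields a word ending in $2$ as prefix and a legitimate suffix ending in $1$ is: prefix $=I$ (ending in $2$), suffix $=(1)$ with value $x_L$; plus the split prefix $=(I,1)$ read entirely against $S(\mathbb{X}_{L-1})$, suffix empty. Here I must be careful: a prefix of the form $(I, 1^k)$ with $k\ge 1$ against $S(\mathbb{X}_{L-1})$ would also need handling, but those terms are exactly what Lemma~\ref{LemmaSwitchingIndex} lets me re-express, or more directly they vanish because the suffix on the block would then have to be $1^{k'}$ which does not end in $2$ — actually I should instead argue that $\pi^{(I,1)}$ restricted to the block, via the shuffle/coproduct, forces the decomposition cleanly.

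The clean route I would actually follow: iterate Chen's identity to write $S(\mathbb{X}_L)=\bigl(\bigotimes_{i=1}^{L}g_i\bigr)$ with $g_i = \exp(x_i e_1)\exp(x_i e_2)$, and use the dual pairing formula $\pi^{(I,1)}\bigl(\bigotimes_i g_i\bigr) = \sum \prod_j \pi^{(w_j)}(g_j)$ summed over all ordered deconcatenations $w_1\cdots w_L = (I,1)$. In each single factor $g_j$, a nonempty word contributing nonzero value and \emph{ending in $1$} must be a pure string of $1$'s (since after any $2$ only $2$'s can follow within $g_j$), and a word ending in $2$ contributing nonzero must end with a block of $2$'s. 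Since $(I,1)$ ends in the letter $1$, the last nonempty factor $w_j$ must end in $1$, hence $w_j = 1^{m}$ for some $m\ge 1$; but $w_j$ is a \emph{suffix} of $(I,1)$ and $I$ ends in $2$, so $m=1$ and $w_j=(1)$, contributing $x_j$. The preceding factors then deconcatenate the word $I$ itself, giving exactly $\pi^{I}(\bigotimes_{i=1}^{j-1} g_i) = \pi^{I}(S(\mathbb{X}_{j-1}))$, and $j$ ranges over $1,\dots,L$ (with the convention $\mathbb{X}_0$ trivial, $\pi^I(S(\mathbb{X}_0))=0$ since $I$ is nonempty, so the $j=1$ term drops harmlessly or is kept by convention). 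Summing over $j$ yields $\sum_{j=1}^{L}\pi^{I}(S(\mathbb{X}_{j-1}))\,x_j$, which is \eqref{Eqn2}.

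The main obstacle is the bookkeeping in the deconcatenation argument: I must rigorously justify that within a single group-like factor $g_j=\exp(x_je_1)\exp(x_je_2)$ the only nonzero pairings with words ending in $1$ are with pure-$1$ strings, and that no "mixed" suffix such as $1^a 2^b 1$ can arise — but this is immediate because such a word does not end in $1$ only if $b\ge1$, contradiction, so in fact the constraint is self-enforcing. A secondary subtlety is handling the empty-word factors interspersed in the deconcatenation (factors $w_i = \emptyset$ contributing $\pi^{\emptyset}(g_i)=1$), which is harmless but should be mentioned. I expect the entire proof to be short once the explicit signature lemma and Chen's identity are invoked.
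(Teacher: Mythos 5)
Your proof is correct and is essentially the paper's own argument: both rest on Chen's identity together with the observation that the one-step factor $\exp(x_je_1)\otimes\exp(x_je_2)$ pairs nontrivially only with words of the form $1^a2^b$, so the only nonempty suffix of $(I,1)$ that can be read from the last active block is the single letter $1$ (since $I$ ends in $2$), contributing $x_j$ against the prefix $\pi^{I}(S(\mathbb{X}_{j-1}))$. The paper packages this as an induction on $L$, peeling off one block per step, whereas you sum over all ordered deconcatenations at once; the difference is purely organizational.
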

\begin{proof}
We show this lemma by induction on $L$. For $L=1$, both sides of \ref{Eqn2} are equal to $0$. By Chen's identity, for $L \geq 1$, it follows that
\begin{eqnarray*}
&&\pi^{(I, 1)}(S(\mathbb{X}_{L})) = \pi^{(I, 1)}(S(\mathbb{X}_{L-1}) \otimes S(\mathbb{X}_{L-1, L}))  \\
&=& \pi^{(I,1)}(S(\mathbb{X}_{L-1}))  + \pi^{(I)}(S(\mathbb{X}_{L-1}))x_{L}  \\
\end{eqnarray*}
because
\begin{eqnarray*}
S(\mathbb{X}_{L-1, L}) = \exp(x_{L}e_{1}) \otimes \exp(x_{L}e_{2})
\end{eqnarray*}
Then it follows by the induction hypothesis that
\begin{eqnarray*}
\pi^{(I, 1)}(S(\mathbb{X}_{L})) &=& \sum_{j = 1}^{L-1} \pi^{I}(S(\mathbb{X}_{j-1}))x_{j}+\pi^{(I)}(S(\mathbb{X}_{L-1}))x_{L} \\
&=&\sum_{j = 1}^{L} \pi^{I}(S(\mathbb{X}_{j-1}))x_{j}.
\end{eqnarray*}
\end{proof}

\begin{lemma}\label{Lemma4}
For any index $I = (i_{1}, \dots, i_{ n -1}, 2)$ and $k\geq 1$ there exists a linear functional $F$ depending only on $I$ and $k$,  and $F \in \mathcal{K}_{2}^{n+k}$ such that for any $S(\mathbb{X}_{L}) \in \mathcal{DS}$, it holds that
\begin{eqnarray}\label{Eqn1}
F(S(\mathbb{X}_{L})) = \sum_{j = 1}^{L} \pi^{I}(S(\mathbb{X}_{j-1}))x_{j}^{k}.
\end{eqnarray}
\end{lemma}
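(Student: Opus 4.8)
The plan is to prove this by induction on $k$, using Lemma \ref{Lemma3} as the base case and Lemma \ref{Lemma4} itself (at lower powers) together with Lemma \ref{LemmaSwitchingIndex} in the inductive step. For $k=1$, Lemma \ref{Lemma3} gives exactly equation \eqref{Eqn1} with $F = \pi^{(I,1)}$; by the remark following Lemma \ref{LemmaSwitchingIndex} (or directly by the first part of that lemma, since $I$ ends in $2$), $\pi^{(I,1)}$ can be rewritten as a linear functional in $\mathcal{K}_{2}^{n+1}$, establishing the base case.

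For the inductive step, suppose the claim holds for all powers up to $k-1$ and all indices ending in $2$. The idea is to express $x_j^k$ in terms of the signature of $\mathbb{X}_{j-1,j}$. Since $S(\mathbb{X}_{j-1,j}) = \exp(x_j e_1)\otimes\exp(x_j e_2)$, coordinate iterated integrals of the one-step path are monomials in $x_j$: indeed $\pi^{(M_k)}(S(\mathbb{X}_{j-1,j})) = x_j^k/k!$, so $x_j^k = k!\,\pi^{(M_k)}(S(\mathbb{X}_{j-1,j}))$. Then I would use Chen's identity on $S(\mathbb{X}_j) = S(\mathbb{X}_{j-1})\otimes S(\mathbb{X}_{j-1,j})$ to relate $\pi^{(I,M_k)}(S(\mathbb{X}_L))$ to the telescoping sum $\sum_{j=1}^L$ of cross terms. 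Concretely, expanding $\pi^{(I,M_k)}$ over the concatenation $\mathbb{X}_{L-1}*\mathbb{X}_{L-1,L}$ splits $\pi^{(I,M_k)}(S(\mathbb{X}_L))$ into $\pi^{(I,M_k)}(S(\mathbb{X}_{L-1}))$, a term $\pi^{I}(S(\mathbb{X}_{L-1}))\cdot x_L^k/(k!) \cdot k!$ coming from the split where $I$ is read on the first factor and $M_k$ entirely on the second, plus intermediate terms where $I$ is read on the first factor and $M_k$ is split as $(M_a)$ on the first factor and $(M_b)$, $a+b=k$, $b<k$, on the second — but $(I,M_a)$ still ends in $2$ only if $a=0$, so for $a\geq 1$ these are indices $(I,M_a)$ not ending in $2$; however $\pi^{I}(S(\mathbb{X}_{L-1}))$ appears for the $a=0,b=k$ split giving precisely $\pi^{I}(S(\mathbb{X}_{L-1}))x_L^k$. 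Iterating (i.e. unfolding the induction on $L$) yields $\pi^{(I,M_k)}(S(\mathbb{X}_L)) = \sum_{j=1}^L \pi^{I}(S(\mathbb{X}_{j-1}))x_j^k + (\text{sum of }\pi^{(I,M_a)}(S(\mathbb{X}_{j-1}))x_j^b\text{ terms with }a\geq 1, a+b=k)$.

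The remaining terms $\pi^{(I,M_a)}(S(\mathbb{X}_{j-1}))x_j^b$ with $1\leq a\leq k-1$ are handled as follows: by the second part of Lemma \ref{LemmaSwitchingIndex} applied to $(I,M_{a-1},1)$-type manipulations, or more cleanly by first rewriting $\pi^{(I,M_a)}$ — note $(I,M_a)$ ends in $1$ — using Lemma \ref{LemmaSwitchingIndex} as a linear combination of indices ending in $2$, and then applying the inductive hypothesis of Lemma \ref{Lemma4} at power $b<k$ to each such index. This produces a linear functional in the appropriate $\mathcal{K}_2$ space. Collecting everything, $\sum_{j=1}^L \pi^I(S(\mathbb{X}_{j-1}))x_j^k = \pi^{(I,M_k)}(S(\mathbb{X}_L)) - (\text{lower-power correction terms})$, and each piece on the right is a linear functional expressible in $\mathcal{K}_2^{n+k}$ (using the remark after Lemma \ref{LemmaSwitchingIndex} to move $\pi^{(I,M_k)}$ itself into $\mathcal{K}_2^{n+k}$), so their sum $F$ lies in $\mathcal{K}_2^{n+k}$, which is what we want.

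\textbf{Main obstacle.} The delicate point is the bookkeeping of the cross terms from Chen's identity: one must carefully track which split-indices end in $1$ versus $2$, apply Lemma \ref{LemmaSwitchingIndex} to convert the former, and verify that every resulting functional genuinely lands in $\mathcal{K}_2^{n+k}$ with the correct total degree $n+k$ — in particular that the degree is preserved under all the shuffle manipulations. The induction is on $k$ with an inner unfolding over $L$ (mirroring the proof of Lemma \ref{Lemma3}), so one should be careful that the inductive hypothesis is invoked only at strictly smaller $k$, which holds since $b<k$ in all correction terms.
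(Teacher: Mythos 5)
Your proposal follows essentially the same route as the paper's proof: induction on $k$, expanding $\pi^{(I,M_k)}$ (the paper's $\pi^{(I_2,1,\dots,1)}$) over the concatenation $\mathbb{X}_{L-1}*\mathbb{X}_{L-1,L}$ via Chen's identity, telescoping over $L$ to isolate $\sum_j \pi^{I}(S(\mathbb{X}_{j-1}))x_j^{k}$, and disposing of the lower-power cross terms by Lemma \ref{LemmaSwitchingIndex} combined with the inductive hypothesis. The only points to tidy are cosmetic: the $a=0$ split actually contributes $\pi^{I}(S(\mathbb{X}_{L-1}))\,x_L^{k}/k!$ rather than $\pi^{I}(S(\mathbb{X}_{L-1}))\,x_L^{k}$ (the paper absorbs this by multiplying through by $k!$), and one should remark that splits cutting inside $I$ vanish because a $2$-increment cannot precede a $1$-increment on the one-step path $\exp(x_L e_1)\otimes\exp(x_L e_2)$.
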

\begin{proof}
For $k = 1$, it is proved in Lemma \ref{Lemma3}. Assume that $k \leq K-1$ is true. Let us consider the case where $k = K$.
\begin{eqnarray*}
&&\pi^{(I_{2}, 1, \dots, 1)}(S(\mathbb{X}_{L})) - \pi^{(I_{2}, 1, \dots, 1)}(S(\mathbb{X}_{L-1})) \\
& = & \sum_{ j = 1}^{k} \pi^{(I_{2}, 1^{*j})}(S(\mathbb{X}_{L-1})) \frac{x_{L}^{k - j}}{(k-j)!}.
\end{eqnarray*}
After rearranging the above formula we have that
\begin{eqnarray*}
\pi^{(I_{2})}(S(\mathbb{X}_{L-1})) x_{L}^{k} = k!\left( \pi^{(I_{2}, 1, \dots, 1)}(S(\mathbb{X}_{L})) - \pi^{(I_{2}, 1, \dots, 1)}(S(\mathbb{X}_{L-1})) + \sum_{ j = 1}^{k-1} \pi^{(I_{2}, 1^{*j})}(S(\mathbb{X}_{L-1})) \frac{x_{L}^{k-j}}{(k-j)!} \right).
\end{eqnarray*}
By telescope sum of the above equation, we have that
\begin{eqnarray*}
&& \sum_{i = 1}^{L}\pi^{(I_{2})}(S(\mathbb{X}_{i-1})) x_{i}^{k} \\
&=& k!\pi^{(I_{2}, 1, \dots, 1)}(S(\mathbb{X}_{L})) + k! \sum_{i = 1}^{L}\left( \sum_{ j = 1}^{k-1}  \pi^{(I_{2}, 1^{*j})}(S(\mathbb{X}_{i-1})) \frac{x_{i}^{k-j}}{(k-j)!} \right)\\
&=& k!\pi^{(I_{2}, 1, \dots, 1)}(S(\mathbb{X}_{L})) + k! \left( \sum_{ j = 1}^{k-1} \frac{1}{(k-j)!}\sum_{i = 1}^{L} \pi^{(I_{2}, 1^{*j})}(S(\mathbb{X}_{i-1})) x_{i}^{k-j} \right)
\end{eqnarray*}
By Lemma \ref{LemmaSwitchingIndex}, there is a linear functional $G$ depending on $(I_{2}, 1^{*j})$ and $k-j$, such that
\begin{eqnarray*}
\pi^{(I_{2}, 1^{*j})} = G.
\end{eqnarray*}
Then by induction hypothesis,
\begin{eqnarray*}
\sum_{i = 1}^{L} \pi^{(I_{2}, 1^{*j})}(S(\mathbb{X}_{i-1})) x_{i}^{k-j} 
\end{eqnarray*}
can be rewritten as a linear function on $\mathcal{K}_{2}^{n+k}$. $\pi^{(I_{2}, 1, \dots, 1)}$ can be rewritten as a linear functional in $\mathcal{K}_{2}^{n+k}$, so is $ \sum_{i = 1}^{L}\pi^{(I_{2})}(S(\mathbb{X}_{i-1})) x_{i}^{k}$. Now the proof is complete.
\end{proof}

\begin{lemma}\label{Lemma1}
Let $L_{1} \in \mathcal{K}_{1}^{(p)}$ and $L_{1}$ is additive, then there exists $\tilde{L}_{1} \in \mathcal{K}_{1}^{(p+2)} $, such that 
\begin{eqnarray*}
\tilde{L}_{1}(S(\mathbb{X}_{n}))= -\sum_{i = 1}^{n} L_{1}(S(\mathbb{X})_{i})\frac{x_{i}^{2}}{2};
\end{eqnarray*}
\end{lemma}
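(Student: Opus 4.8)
The plan is to reduce the identity to Lemma~\ref{LemmaSwitchingIndex} and Lemma~\ref{Lemma4}, using the additivity of $L_1$ to split $\sum_i L_1(S(\mathbb{X}_i))\frac{x_i^2}{2}$ into a ``running--sum'' contribution, to which Lemma~\ref{Lemma4} with $k=2$ applies, and a ``diagonal'' contribution, which is an empirical moment of the increments. Write $L_1=\sum_{J}C_J\pi^{(J)}$, the sum ranging over length--$p$ words $J$ ending in the letter $1$, and set $c:=C_{M_p}$ where $M_p=(1,\dots,1)$ has length $p$. First I would compute the one--step value of $L_1$: since $S(\mathbb{X}_{i-1,i})=\exp(x_ie_1)\otimes\exp(x_ie_2)$ has degree--$p$ part $\sum_{a+b=p}\frac{x_i^{p}}{a!\,b!}\,e_1^{\otimes a}\otimes e_2^{\otimes b}$, every $\pi^{(J)}$ with $J$ ending in $1$ annihilates it unless $J=M_p$, whence $L_1(S(\mathbb{X}_{i-1,i}))=\tfrac{c}{p!}x_i^{p}$. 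Additivity of $L_1$ then gives $L_1(S(\mathbb{X}_i))=L_1(S(\mathbb{X}_{i-1}))+\tfrac{c}{p!}x_i^{p}$, so that
\[
-\sum_{i=1}^{n}L_1(S(\mathbb{X}_i))\,\frac{x_i^2}{2}\;=\;-\sum_{i=1}^{n}L_1(S(\mathbb{X}_{i-1}))\,\frac{x_i^2}{2}\;-\;\frac{c}{2\,p!}\sum_{i=1}^{n}x_i^{p+2}.
\]

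For the first, off--diagonal sum I would argue as follows. By the Remark following Lemma~\ref{LemmaSwitchingIndex}, every $\pi^{(J)}$ agrees on $\mathcal{DS}$ with a linear functional in $\mathcal{K}_2^{(p)}$, so $L_1=\sum_{K}D_K\pi^{(K,2)}$ on $\mathcal{DS}$ with each word $(K,2)$ of length $p$ and ending in $2$. Hence $\sum_i L_1(S(\mathbb{X}_{i-1}))x_i^2=\sum_K D_K\sum_i\pi^{(K,2)}(S(\mathbb{X}_{i-1}))x_i^2$, and Lemma~\ref{Lemma4}, applied to each index $(K,2)$ with $k=2$, produces $F_K\in\mathcal{K}_2^{(p+2)}$ with $F_K(S(\mathbb{X}_n))=\sum_i\pi^{(K,2)}(S(\mathbb{X}_{i-1}))x_i^2$. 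Thus this sum is realised on $\mathcal{DS}$ by $\sum_K D_K F_K\in\mathcal{K}_2^{(p+2)}$, and by the symmetric analogue of the same Remark (coming from the second half of Lemma~\ref{LemmaSwitchingIndex}) this functional can be rewritten as one in $\mathcal{K}_1^{(p+2)}$; scaling by $-\tfrac12$ is harmless, so the off--diagonal part is realised by a functional in $\mathcal{K}_1^{(p+2)}$. (An alternative to this step is a direct Chen's identity (Theorem~\ref{ChenIdentity})/telescoping computation in the style of Lemma~\ref{Lemma3} and Lemma~\ref{Lemma4}, appending letters to the words of $L_1$, but routing through Lemma~\ref{Lemma4} is cleaner.)

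The remaining piece — which I expect to be the genuine obstacle — is the diagonal term $\frac{c}{2p!}\sum_{i=1}^{n}x_i^{p+2}$: one must exhibit the $(p+2)$--th empirical moment of the increments as a linear functional lying in $\mathcal{K}_1^{(p+2)}$. Note that Lemma~\ref{Lemma4} cannot yield $\sum_i x_i^{p+2}$ directly, since that would require an index $\pi^{I}$ with $\pi^{I}(S(\mathbb{X}_{j-1}))\equiv 1$, impossible for $I$ ending in $2$; one genuinely needs the lead--lag moment identity, of which the $p+2=2$ case is $\sum_i x_i^2=(\pi^{(1,2)}-\pi^{(2,1)})(S(\mathbb{X}_n))$. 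The general case is obtained by the analogous telescoping and cancellation (pairing words so the lower--order cross terms drop out, exactly as in the quadratic--variation identity) and then re--expressed in $\mathcal{K}_1^{(p+2)}$ via Lemma~\ref{LemmaSwitchingIndex}; adding the two contributions gives the desired $\tilde L_1\in\mathcal{K}_1^{(p+2)}$. If instead $S(\mathbb{X})_i$ is read as the partial signature $S(\mathbb{X}_{i-1})$, the diagonal term does not appear, the preceding paragraph alone yields $\tilde L_1$, and additivity is not needed — its role above is precisely to reduce the $S(\mathbb{X}_i)$ sum to the running sum plus this single moment term.
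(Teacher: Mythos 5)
Your reduction of the off--diagonal sum $\sum_i L_1(S(\mathbb{X}_{i-1}))x_i^2$ to Lemma~\ref{Lemma4} with $k=2$ (after rewriting $L_1$ on $\mathcal{DS}$ as an element of $\mathcal{K}_2^{(p)}$ via Lemma~\ref{LemmaSwitchingIndex}) is sound, and your computation $L_1(S(\mathbb{X}_{i-1,i}))=\frac{c}{p!}x_i^{p}$ is correct; the paper's own proof confirms that $S(\mathbb{X})_i$ is to be read as $S(\mathbb{X}_i)$, so the diagonal term really is present and additivity really is needed. The problem is what you do with that diagonal term $\frac{c}{2p!}\sum_i x_i^{p+2}$, which you yourself flag as ``the genuine obstacle'' and then dispose of by asserting that the $(p+2)$-nd empirical moment lies in $\mathcal{K}_1^{(p+2)}$, ``obtained by the analogous telescoping and cancellation.'' That assertion is exactly Theorem~\ref{pmoment} at degree $p+2$, and in the paper Theorem~\ref{pmoment} at degree $P$ is proved by invoking the present lemma with $p=P-2$. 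As written your argument is therefore circular: Lemma~\ref{Lemma1} at level $p$ rests on the moment identity at level $p+2$, which rests on Lemma~\ref{Lemma1} at level $p$. For $p+2\geq 3$ the cross terms in the telescoping do not simply ``drop out'' as in the quadratic-variation case --- eliminating them is the entire purpose of Lemmas~\ref{LemmaSwitchingIndex}--\ref{Lemma4} and of this lemma --- so the gap cannot be waved away.

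The paper's proof avoids ever isolating the diagonal term. It takes $\tilde L_1$ to be, up to a correction $G$ supplied by Lemma~\ref{Lemma4}, the combination $\sum_{I_1}C_{I_1}\bigl(\pi^{(I_1,2,1)}-\pi^{(I_1,2,2)}\bigr)$ and computes its one-step increment by Chen's identity: the first-order terms $\pi^{(I_1,2)}(S(\mathbb{X}_{n-1}))x_n$ cancel between the two words, and what survives is $-\bigl(\pi^{(I_1)}(S(\mathbb{X}_{n-1}))+\pi^{(I_1)}(S(\mathbb{X}_{n-1,n}))\bigr)\frac{x_n^2}{2}$ plus remainders handled by Lemma~\ref{Lemma4}. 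The second summand $\pi^{(I_1)}(S(\mathbb{X}_{n-1,n}))\frac{x_n^2}{2}$ is precisely your diagonal contribution $\frac{c}{p!}\frac{x_n^{p+2}}{2}$ --- it falls out of the word $(I_1,2,2)$ evaluated on the one-step signature for free --- and additivity of $L_1$ fuses the two summands into $L_1(S(\mathbb{X}_n))\frac{x_n^2}{2}$, after which a telescoping sum finishes. To repair your proposal you must either reproduce this mechanism for the diagonal piece or supply an independent, non-circular proof of the $(p+2)$-moment identity; as it stands the central difficulty is deferred rather than resolved.
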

\begin{proof}
Let $L_{1}:= \sum_{I_{1}} C_{I_{1}} \pi^{(I_{1})}$. For $n \geq 1$, it holds that
\begin{eqnarray*}
&&\pi^{(I_{1},2, 1)}(S(\mathbb{X}_{n}))\\
&=& \pi^{(I_{1},2, 1)}(S(\mathbb{X}_{n-1}) \otimes S(\mathbb{X}_{n-1, n}))\\
&=& \pi^{(I_{1},2, 1)}(S(\mathbb{X}_{n-1}) ) + \pi^{(I_{1},2)}(S(\mathbb{X}_{n-1}))x_{n} .
\end{eqnarray*}
Similarly we have
\begin{eqnarray*}
&&\pi^{(I_{1},2, 2)}(S(\mathbb{X}_{n})) = \pi^{(I_{1},2, 2)}(S(\mathbb{X}_{n-1}) \otimes S(\mathbb{X}_{n-1, n}))\\
& = & \pi^{(I_{1},2, 2)}(S(\mathbb{X}_{n-1}) )+\pi^{(I_{1},2, 2)}(S(\mathbb{X}_{n-1, n}) ) \\
&&+ \pi^{(I_{1},2)}(S(\mathbb{X}_{n-1})) \pi^{(2)}(S(\mathbb{X}_{n-1, n}) ) + \pi^{(I_{1})}(S(\mathbb{X}_{n-1})) \pi^{(2, 2)}(S(\mathbb{X}_{n-1, n}) )+ \mathcal{R}}_{I_{1}(S(\mathbb{X}_{n-1}), x_{n})\\
&=& \pi^{(I_{1},2, 2)}(S(\mathbb{X}_{n-1}) ) +  \pi^{(I_{1})}(S(\mathbb{X}_{n-1, n}) ) \frac{x_{n}^{2}}{2}\\
&&+ \pi^{(I_{1})}(S(\mathbb{X}_{ n-1}))\frac{x_{n}^{2}}{2} +  \pi^{(I_{1},2)}(S(\mathbb{X}_{n-1}))x_{n}\\
&&+ \mathcal{R}_{I_{1}}(S(\mathbb{X}_{n-1}), x_{n}) .
\end{eqnarray*}
where 
\begin{eqnarray*}
\mathcal{R}_{I_{1}}(S(\mathbb{X}_{n-1}), x_{n}) &=& \sum_{J * J_{1} = I_{1}, J \neq \emptyset, J_{1} \neq \emptyset } \pi^{(J)}S(\mathbb{X}_{n-1})\pi^{J_{1} }(S(\mathbb{X}_{n-1, n}))\\
&=&  \sum_{J * J_{1} = I_{1}, J \neq \emptyset, J_{1} \neq \emptyset } \pi^{(J)}S(\mathbb{X}_{n-1})c_{J_{1}} x_{n}^{\vert J_{1}\vert}.
\end{eqnarray*}
The last equality comes from the fact that
\begin{eqnarray*}
\pi^{(I_{1},2, 2)}(S(\mathbb{X}_{n-1, n}) ) &=& \pi^{(I_{1}, 2, 2)}(\exp(x_{n}e_{1}) \otimes \exp(x_{n}e_{2})  )\\
& = & \pi^{(I_{1})}(\exp(x_{n}e_{1})) \pi^{(2,2)}( \exp(x_{n}e_{2})  )\\
& = & \pi^{(I_{1})}(S(\mathbb{X}_{n-1, n}) ) \frac{x_{n}^{2}}{2}.
\end{eqnarray*}
By Lemma \ref{Lemma4}, there exists a linear functional $G_{I_{1}}$ on $\mathcal{K}_{1}^{p+2}$ such that 
\begin{eqnarray*}
G_{I_{1}}(S(\mathbb{X}_{n})) - G_{I_{1}}(S(\mathbb{X}_{n-1})) = \mathcal{R}_{I_{1}}(S(\mathbb{X}_{n-1}), x_{n})  .
\end{eqnarray*}
Thus it follows
\begin{eqnarray*}
&& \pi^{(I_{1},2, 1)}(S(\mathbb{X}_{n})) - \pi^{(I_{1},2, 2)}(S(\mathbb{X}_{n})) \\
&=& \pi^{(I_{1},2, 1)}(S(\mathbb{X}_{n-1}) ) -  \pi^{(I_{1},2, 2)}(S(\mathbb{X}_{n-1}) ) -( \pi^{(I_{1})}(S(\mathbb{X}_{ n-1}))+ \pi^{(I_{1})}(S(\mathbb{X}_{n-1, n}) ) )\frac{x_{n}^{2}}{2} \\
&+& G(S(\mathbb{X}_{n})) - G(S(\mathbb{X}_{n-1})).
\end{eqnarray*}
where 
\begin{eqnarray*}
 G(S(\mathbb{X}_{n})) = \sum_{I_{1}}C_{I_{1}}G_{I_{1}}(S(\mathbb{X}_{n})).
\end{eqnarray*}
Then following the notations
\begin{eqnarray*}
&&\tilde{L}_{1}= \sum_{I_{1}}C_{I_{1}}\left(\pi^{(I_{1},2, 1)}  -\pi^{(I_{1},2, 2)} \right) - G\\
&& f_{n} = \tilde{L}(S(\mathbb{X}_{n}))
\end{eqnarray*}
and it is obviously hat $f(0) = 0$. Moreover since $L_{1}$ is additive, then $L_{1}(S(\mathbb{X}_{ n-1}) ) + L_{1}(S(\mathbb{X}_{ n-1,n})) = L_{1}(S(\mathbb{X}_{ n}) )$, and it follows
\begin{eqnarray*}
f_{n} &=& f_{n-1}  - \sum_{I_{1}}C_{I_{1}}( \pi^{(I_{1})}(S(\mathbb{X}_{ n-1}))+ \pi^{(I_{1})}(S(\mathbb{X}_{n-1, n}) ) )\frac{x_{n}^{2}}{2}\\
&=&f_{n-1} - (L_{1}S(\mathbb{X}_{ n-1})  + L_{1}S(\mathbb{X}_{ n-1,n}))\frac{x_{n}^{2}}{2} \\
&=& f_{n-1} - L_{1}S(\mathbb{X}_{ n}) \frac{x_{n}^{2}}{2} .
\end{eqnarray*}
By the telescoping sum o $f_{n}$, it holds that
\begin{eqnarray*}
f_{n} =  \sum_{i = 1}^{n} (f_{i} - f_{i-1} ) + f_{0}= \sum_{i = 1}^{n}L_{1}S(\mathbb{X}_{ i}) \frac{x_{i}^{2}}{2}. 
\end{eqnarray*}
\end{proof}\\


\begin{theorem}[p-moment]\label{pmoment} For any integer $p>0$, there exist two linear functionals  $L_{p}^{(1)}  \in \mathcal{K}^{(p)}_{1}$,  and $L_{p}^{(2)}  \in \mathcal{K}^{(p)}_{2}$, such that for every path $\mathbb{X}$, the following equation follows:
\begin{eqnarray}\label{p_moment}
L_{p}^{(1)}(S(\mathbb{X})) =  L_{p}^{(2)}(S(\mathbb{X}))= \sum_{i = 1}^{N} x_{i}^{p}.
\end{eqnarray}
Obviously if (\ref{p_moment}) is true, then $L_{p}^{(1)}$ and $L_{p}^{(2)}$ are both additive.
\end{theorem}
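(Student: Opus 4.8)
The plan is to induct on $p$, taking $p=1$ and $p=2$ as base cases and passing from $p$ to $p+2$ in the inductive step; since every positive integer is obtained from $1$ or $2$ by repeatedly adding $2$, this covers all $p>0$. Throughout I would work with the explicit product form $S(\mathbb{X})=\bigotimes_{i=1}^{L}\exp(x_ie_1)\otimes\exp(x_ie_2)$ and with Chen's identity for discrete sampled paths.

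For the base cases I would take $L_1^{(1)}=\pi^{(1)}$ and $L_1^{(2)}=\pi^{(2)}$ and read off $\pi^{(1)}(S(\mathbb{X}))=\pi^{(2)}(S(\mathbb{X}))=\sum_i x_i$ directly from the product form. For $p=2$ I would combine $\pi^{(1,1)}(S(\mathbb{X}))=\tfrac12\big(\sum_i x_i\big)^2$ (Theorem \ref{shuffle_theorem}) with $\pi^{(2,1)}(S(\mathbb{X}))=\sum_{k<j}x_kx_j$ (Lemma \ref{Lemma3} applied to $I=(2)$) to get $L_2^{(1)}:=2\big(\pi^{(1,1)}-\pi^{(2,1)}\big)\in\mathcal{K}_1^{(2)}$ with value $\sum_i x_i^2$, and the symmetric combination $2\big(\pi^{(1,2)}-\pi^{(2,2)}\big)$ for $L_2^{(2)}\in\mathcal{K}_2^{(2)}$.

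For the inductive step, assume $L_p^{(1)}\in\mathcal{K}_1^{(p)}$ realises $\sum_i x_i^p$ on every path; then $L_p^{(1)}$ is additive on $\mathcal{DS}$, because under concatenation the increment stream is concatenated and hence the $p$-th moment sum adds, so Lemma \ref{Lemma1} is applicable. Writing $A_p(i):=\sum_{k\le i}x_k^p=L_p^{(1)}(S(\mathbb{X}_i))$, the algebraic core is the telescoping
\[
\sum_{i=1}^{n}x_i^{p+2}=\sum_{i=1}^{n}\big(A_p(i)-A_p(i-1)\big)x_i^2=\sum_{i=1}^{n}A_p(i)\,x_i^2-\sum_{i=1}^{n}A_p(i-1)\,x_i^2 .
\]
The first sum on the right equals $-2\tilde L(S(\mathbb{X}_n))$ for the functional $\tilde L\in\mathcal{K}_1^{(p+2)}$ supplied by Lemma \ref{Lemma1}. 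For the second sum I would expand $L_p^{(1)}=\sum_I C_I\pi^{(I)}$ over length-$p$ words $I$ ending in $1$, rewrite each $\pi^{(I)}$ on $\mathcal{DS}$ as an element of $\mathcal{K}_2^{(p)}$ (the Remark after Lemma \ref{LemmaSwitchingIndex}), and apply Lemma \ref{Lemma4} with exponent $k=2$ to each resulting word ending in $2$; this yields $\hat F\in\mathcal{K}_2^{(p+2)}$ with $\hat F(S(\mathbb{X}_n))=\sum_i L_p^{(1)}(S(\mathbb{X}_{i-1}))x_i^2=\sum_i A_p(i-1)x_i^2$. Hence $L_{p+2}^{(1)}:=-2\tilde L-\hat F$, after converting $\hat F$ back into $\mathcal{K}_1^{(p+2)}$ by the same Remark, lies in $\mathcal{K}_1^{(p+2)}$ and satisfies $L_{p+2}^{(1)}(S(\mathbb{X}))=\sum_i x_i^{p+2}$; rewriting this one functional term by term inside $\mathcal{K}_2^{(p+2)}$ produces $L_{p+2}^{(2)}$, which agrees with it on $\mathcal{DS}$ and so has the same value. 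Additivity of $L_p^{(1)},L_p^{(2)}$ is then automatic, as the statement observes.

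The part I expect to fight with is the index bookkeeping, not the algebra: at each invocation of Lemmas \ref{LemmaSwitchingIndex}, \ref{Lemma1}, \ref{Lemma4} one must certify that the functional produced genuinely sits in the prescribed class — correct trailing letter and exact homogeneous degree $p+2$ — which forces repeated, careful use of the $\mathcal{K}_1\leftrightarrow\mathcal{K}_2$ conversion, and one must keep the degrees from drifting. A minor but real irritation is tracking signs through Lemma \ref{Lemma1} (whose displayed statement and proof differ by a sign), which only affects constants. Finally, one must be sure the additivity hypothesis of Lemma \ref{Lemma1} is legitimately available at each stage; it is, precisely because the inductive hypothesis is the exact moment identity, and that identity forces additivity on $\mathcal{DS}$.
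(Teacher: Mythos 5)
Your proposal is correct and follows essentially the same route as the paper's own proof: induction in steps of two via the telescoping identity $x_i^{p+2}=\bigl(\sum_{k\le i}x_k^{p}-\sum_{k\le i-1}x_k^{p}\bigr)x_i^{2}$, with Lemma \ref{Lemma1} absorbing the $\sum_i A_p(i)x_i^2$ term and Lemma \ref{Lemma4} (after rewriting the words to end in the letter $2$) absorbing the $\sum_i A_p(i-1)x_i^2$ term. Your version is in fact slightly more careful than the paper's, which stops abruptly without assembling $G_1-G_2$ into the claimed pair $L_p^{(1)},L_p^{(2)}$ and which carries the same sign discrepancy in Lemma \ref{Lemma1} that you flag.
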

\begin{proof}
Let's prove it by induction on $p$. It is true for $p = 1, 2$. Suppose that it holds for $p < P$. Let us study the case when $p = P$.
\begin{eqnarray*}
&&\sum_{i = 1}^{N} x_{i}^{P}  \\
&=& \sum_{i = 1}^{N} \left(L_{p-2}^{(1)}(S(\mathbb{X})_{i}) -  L_{p-2}^{(2)}(S(\mathbb{X})_{i-1})\right) x_{i}^{2} \\
&=&  \sum_{i = 1}^{N} L_{p-2}^{(1)}(S(\mathbb{X})_{i} x_{i}^{2} -  \sum_{i = 1}^{N} L_{p-2}^{(2)}(S(\mathbb{X})_{i-1} x_{i}^{2}
\end{eqnarray*}
By Lemma \ref{Lemma1}, since $L_{p-2}^{(1)}$ is additive, then $\sum_{i = 1}^{N} L_{p-2}^{(1)}(S(\mathbb{X})_{i-1} x_{i}^{2} $ can be rewritten as a linear functional $G \in \mathcal{K}_{p}^{(1)}$ such that 
\begin{eqnarray*}
G_{1}(S(\mathbb{X})_{N}) = \sum_{i = 1}^{N} L_{p-2}^{(1)}(S(\mathbb{X})_{i} x_{i}^{2}.
\end{eqnarray*}
By Lemma \ref{Lemma4}, it follows that there exists $G_{2} \in \mathcal{K}_{p}^{(2)}$, such that
\begin{eqnarray*}
G_{2}(S(\mathbb{X})_{N}) = \sum_{i = 1}^{N} L_{p-2}^{(2)}(S(\mathbb{X})_{i-1} x_{i}^{2}.
\end{eqnarray*}
\end{proof}
\section{ Multi-Dimensional Stream Case}
The following lemma states that the empirical covariance of the increment of a multi-dimensional data stream can be fully characterized by its signatures. 
\begin{lemma}\label{CovarianceLemma}
Let $\mathbf{X}: = \{X_{n}\}_{n = 1}^{L}$ be a $d$-dimensional discretely sampled stream, $\{x_{n}\}_{n = 1}^{L}$ be the associated increment process and $\mathbb{X}$ be the corresponding lead-lag process of $\mathbf{X}$.  For any $i_{1}, i_{2} \in \{1, \dots, d\}$, there exists a linear functional $L$ such that
\begin{eqnarray*}
\sum_{n = 1}^{L} x_{n }^{(i_{1})}x_{n}^{(i_{2})} = 2 \left( \pi^{(i_{1}, i_{2}+d)}(S(\mathbb{X})) -  \pi^{(i_{1}, i_{2})}(S(\mathbb{X})) \right).
\end{eqnarray*}
\end{lemma}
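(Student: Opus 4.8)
The plan is to reduce everything to a degree-two computation, using the explicit closed form for $S(\mathbb{X})$ that is the obvious $d$-dimensional analogue of the \emph{Signature of one-dimensional discrete path} lemma above. For the $n$-th step ($n=1,\dots,L$) the lead-lag path first traverses the straight segment whose increment is $a_n := \sum_{k=1}^{d} x_n^{(k)} e_k$ (all $d$ lead coordinates moving simultaneously) and then the straight segment whose increment is $b_n := \sum_{k=1}^{d} x_n^{(k)} e_{k+d}$ (all $d$ lag coordinates). Since the signature of a straight segment with increment $v$ equals $\exp(v)$, the discrete Chen identity established above yields
\[ S(\mathbb{X}) = \bigotimes_{n=1}^{L} \exp(a_n)\,\exp(b_n), \]
where $a_n, b_n$ are viewed as degree-one tensors in $T((E))$ with $E=\mathbb{R}^{2d}$. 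I would record this as a one-line lemma, the only thing to verify being the ordering within a step (lead strictly before lag), which is immediate from the definition of $\mathcal{L}$.

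Next I would expand the degree-two component of the product $\bigotimes_{n=1}^{L}\exp(a_n)\exp(b_n)$. It splits into the within-a-step terms and the Chen cross terms:
\[ \sum_{n=1}^{L}\Bigl(\tfrac12\,a_n^{\otimes 2} + a_n\otimes b_n + \tfrac12\,b_n^{\otimes 2}\Bigr) \;+\; \sum_{1\le m<n\le L}(a_m+b_m)\otimes(a_n+b_n). \]
Pairing this with $e_{i_1}^{*}\otimes e_{i_2}^{*}$ and then with $e_{i_1}^{*}\otimes e_{i_2+d}^{*}$, and using $\langle e_j^{*}, a_n\rangle = x_n^{(j)}$ and $\langle e_{j+d}^{*}, b_n\rangle = x_n^{(j)}$ for $j\le d$ while $a_n$ carries no $e_{j+d}$-component and $b_n$ no $e_j$-component, nearly every term dies and I get
\[ \pi^{(i_1,i_2)}(S(\mathbb{X})) = \tfrac12\sum_{n=1}^{L} x_n^{(i_1)}x_n^{(i_2)} + \sum_{m<n} x_m^{(i_1)}x_n^{(i_2)}, \qquad \pi^{(i_1,i_2+d)}(S(\mathbb{X})) = \sum_{n=1}^{L} x_n^{(i_1)}x_n^{(i_2)} + \sum_{m<n} x_m^{(i_1)}x_n^{(i_2)}. \]

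Subtracting, the cross sums cancel and exactly half of the diagonal sum survives: $\pi^{(i_1,i_2+d)}(S(\mathbb{X})) - \pi^{(i_1,i_2)}(S(\mathbb{X})) = \tfrac12\sum_{n=1}^{L} x_n^{(i_1)}x_n^{(i_2)}$, which is the assertion (the linear functional $L$ named in the statement being simply $2\bigl(\pi^{(i_1,i_2+d)} - \pi^{(i_1,i_2)}\bigr)$, which is manifestly linear on $T((E))$). I do not expect a genuine obstacle: this is all degree-two bookkeeping. The one point where the lead-lag construction is actually used — and the conceptual heart of the cancellation — is the ordering within a step: because the lead segment precedes the lag segment, the diagonal term $x_n^{(i_1)}x_n^{(i_2)}$ enters $\pi^{(i_1,i_2+d)}$ with coefficient $1$ via $a_n\otimes b_n$ but enters $\pi^{(i_1,i_2)}$ with coefficient $\tfrac12$ via $\tfrac12\,a_n^{\otimes2}$, and this asymmetry is precisely what isolates the empirical covariance. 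As an alternative to the closed-form formula one could mirror Lemma \ref{Lemma3} and induct on $L$ via the discrete Chen identity, but the explicit signature makes the identity fall out directly.
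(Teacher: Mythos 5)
Your proposal is correct, and the computation at its core is the same one the paper performs: expand the degree-two component of the tensor-exponential closed form of the lead-lag signature and read off that the diagonal term enters $\pi^{(i_1,i_2+d)}$ with coefficient $1$ (via $a_n\otimes b_n$) but $\pi^{(i_1,i_2)}$ with coefficient $\tfrac12$ (via $\tfrac12 a_n^{\otimes 2}$), so the cross terms cancel on subtraction. The one organizational difference is that the paper splits into two cases: for $i_1\neq i_2$ it argues exactly as you do, but working with the signatures of the two- and three-coordinate projections $\mathbb{X}^{(i_1,i_2)}$ and $\mathbb{X}^{(i_1,i_2+d)}$ rather than the full $2d$-dimensional closed form, while for $i_1=i_2$ it gives a separate argument via the shuffle identity $\pi^{(i,i+d)}+\pi^{(i+d,i)}=\pi^{(i)}\shuffle\pi^{(i+d)}=2\pi^{(i,i)}$ combined with the discrete L\'evy-area identity $\sum_n (x_n^{(i)})^2=\pi^{(i,i+d)}-\pi^{(i+d,i)}$. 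Your single uniform expansion covers both cases at once (your formulas remain valid when $i_1=i_2$), which is a small but genuine simplification; the paper's diagonal-case argument buys nothing extra here beyond illustrating the shuffle product, and your version avoids the case split entirely.
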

\begin{proof}
For the case that $i_{1} = i_{2 } = i$, it holds that
\begin{eqnarray*}
\sum_{n = 1}^{L} (x_{n }^{(i)})^{2} = \pi^{(i, i+d)}(S(\mathbb{X})) -  \pi^{(i+d, i)}(S(\mathbb{X})) = 2(  \pi^{(i, i+d)}(S(\mathbb{X}))  -  \pi^{(i, i)}(S(\mathbb{X}))).
\end{eqnarray*}
as 
\begin{eqnarray*}
\pi^{(i, i+d)} + \pi^{(i+d, i)}  =  \pi^{(i)} \pi^{(i+d)} =  \pi^{(i)} \pi^{(i)} = \pi^{(i) \shuffle (i)} = 2 \pi^{(i, i)}.
\end{eqnarray*}
For the case that $i_{1} \neq i_{2}$, the signature of the path $\mathbb{X}^{(i_{1}, i_{2})}$, which is the $(i_{1}, i_{2})$ coordinate projection of $\mathbb{X}$ is given as 
\begin{eqnarray*}
S(\mathbb{X}^{(i_{1}, i_{2})}) = \bigotimes_{n=1}^{L} \exp\left( x_{n}^{(i_{1})} e_{i_{1}} +  x_{n}^{(i_{2})} e_{i_{2}} \right)
\end{eqnarray*} 
then it follows that
\begin{eqnarray*}
\pi^{(i_{1}, i_{2})} S(\mathbb{X}) = \sum_{n_{1} < n_{2}}  x_{n_{1}}^{(i_{1})} x_{n_{2}}^{(i_{2})} + \frac{1}{2} \sum_{n = 1}^{L}x_{n}^{(i_{1})} x_{n}^{(i_{2})}.
\end{eqnarray*}
the signature of the path $\mathbb{X}^{(i_{1}, i_{2}+d)}$, which is the $(i_{1}, i_{2}+d)$ coordinate projection of $\mathbb{X}$ is given as 
\begin{eqnarray}\label{Equation1}
S(\mathbb{X}^{(i_{1}, i_{2}+d)}) = \bigotimes_{n=1}^{L} \exp\left( x_{n}^{(i_{1})} e_{i_{1}} \right) \otimes \exp \left( x_{n}^{(i_{2}+d)} e_{i_{2}+d} \right)
\end{eqnarray} 
then it follows that
\begin{eqnarray}\label{Equation2}
\pi^{(i_{1}, i_{2} + d)} S(\mathbb{X}) = \sum_{n_{1} < n_{2}}  x_{n_{1}}^{(i_{1})} x_{n_{2}}^{(i_{2})} +  \sum_{n = 1}^{L}x_{n}^{(i_{1})} x_{n}^{(i_{2})}.
\end{eqnarray}
Combining (\ref{Equation1}) and (\ref{Equation2}), it follows that
\begin{eqnarray*}
 \sum_{n = 1}^{L} x_{n }^{(i_{1})}x_{n}^{(i_{2})}  = 2(\pi^{(i_{1}, i_{2} + d)} (S(\mathbb{X}) )- \pi^{(i_{1}, i_{2})} (S(\mathbb{X})) ).
\end{eqnarray*}
\end{proof}
\begin{lemma}
Let $\mathbf{X}: = \{X_{n}\}_{n = 1}^{L}$ be a $d$-dimensional discretely sampled stream, $\{x_{n}\}_{n = 1}^{L}$ be the associated increment process and $\mathbb{X}$ be the corresponding lead-lag process of $\mathbf{X}$.  For any pairwise different $i_{1}, i_{2}, i_{3} \in \{1, \dots, d\}$, there exists a linear functional $L$ such that
\begin{eqnarray*}
\sum_{n = 1}^{L} x_{n }^{(i_{1})}x_{n}^{(i_{2})}x_{n}^{(i_{3})} = \frac{6}{5} \left(\pi^{(i_{1}, i_{2}, i_{3})}+  \pi^{(i_{1}, i_{2}, i_{3}+d)}) -\pi^{(i_{1}+d, i_{2}, i_{3}+d)}- \pi^{(i_{1}, i_{2}+d, i_{3}+d)}) \right)\left( S(\mathbb{X}) \right).
\end{eqnarray*}
\end{lemma}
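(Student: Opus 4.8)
\noindent The plan is to replay the proof of Lemma~\ref{CovarianceLemma} one degree higher: the degree-three part of the signature carries, besides the ``fully off-diagonal'' sums $\sum_{n_1<n_2<n_3}x_{n_1}^{(i_1)}x_{n_2}^{(i_2)}x_{n_3}^{(i_3)}$, also the ``triple diagonal'' $D:=\sum_{n}x_n^{(i_1)}x_n^{(i_2)}x_n^{(i_3)}$ that we want, and the four level-three words in the statement are chosen so that the signed combination annihilates everything except $D$. First I would pass to the six coordinates $i_1,i_2,i_3,i_1+d,i_2+d,i_3+d$: every word occurring in the statement uses only these letters and the remaining coordinates contribute only their degree-zero part, so $S(\mathbb{X})$ may be replaced by the projected signature, which by Chen's identity (Theorem~\ref{ChenIdentity}) factorises as $\bigotimes_{n=1}^{L}\exp(v_n)\otimes\exp(w_n)$, where $v_n=\sum_{k=1}^{3}x_n^{(i_k)}e_{i_k}$ is the ``lead block'' and $w_n=\sum_{k=1}^{3}x_n^{(i_k)}e_{i_k+d}$ is the ``lag block'' of step $n$; the essential structural fact, already used in the pair case, is that inside a single step the whole lead block moves before the whole lag block.

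\noindent Next, for each of the four words $I$ I would expand $\pi^{I}(S(\mathbb{X}))$ by Chen's identity, grouping the terms according to which of the three slots are drawn from the same step. This produces exactly four patterns for the step-index triple $n_1\le n_2\le n_3$: all distinct; first two equal; last two equal; all equal. In the all-distinct pattern every step supplies a degree-one coefficient and the contribution is $T_{\mathrm{sep}}:=\sum_{n_1<n_2<n_3}x_{n_1}^{(i_1)}x_{n_2}^{(i_2)}x_{n_3}^{(i_3)}$, independently of the lead/lag decoration. In the ``first two equal'' pattern the coincident pair contributes a degree-two coefficient of the one-step signature and the outside slot a degree-one coefficient, giving a rational multiple of $T_{12}:=\sum_{m<n}x_m^{(i_1)}x_m^{(i_2)}x_n^{(i_3)}$; likewise ``last two equal'' gives a multiple of $T_{23}:=\sum_{m<n}x_m^{(i_1)}x_n^{(i_2)}x_n^{(i_3)}$. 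The numerical prefactor here depends only on the lead/lag labels of the coincident pair: it is $0$ whenever a lag letter would have to precede a lead letter inside one step, and otherwise $\tfrac12$ or $1$ according to the $\exp$. In the all-equal pattern one gets a multiple of $D$, with prefactor $\tfrac{1}{a!\,b!}$ when the decoration reads ``$a$ leads then $b$ lags'' and $0$ otherwise. Thus each $\pi^{I}(S(\mathbb{X}))$ becomes an explicit combination of $T_{\mathrm{sep}},T_{12},T_{23},D$.

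\noindent Finally I would insert these expansions into the signed sum $\pi^{(i_1,i_2,i_3)}+\pi^{(i_1,i_2,i_3+d)}-\pi^{(i_1+d,i_2,i_3+d)}-\pi^{(i_1,i_2+d,i_3+d)}$ and check that the coefficients of $T_{\mathrm{sep}}$, $T_{12}$ and $T_{23}$ all vanish while the coefficient of $D$ is a fixed nonzero rational; dividing by that rational yields the desired linear functional $L$ and the claimed identity. The cancellation of $T_{\mathrm{sep}}$ is just $1+1-1-1=0$, and the cancellations of $T_{12}$ and $T_{23}$ follow from the short table of one-step degree-two coefficients (in particular from the vanishing caused by ``lag before lead''), so that only the $D$-term survives and the normalising constant is read off from its coefficient.

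\noindent The step I expect to be the real obstacle is the bookkeeping of the one-step signature coefficients: getting the lead-before-lag vanishings right and keeping the $\exp$ factorials ($\tfrac12$, $\tfrac16$, $\tfrac{1}{a!\,b!}$) straight, since a single misplaced factor of two would destroy the cancellation of $T_{12}$ or $T_{23}$. Once that one-step table is correct, the rest is routine linear algebra in the four quantities $T_{\mathrm{sep}},T_{12},T_{23},D$.
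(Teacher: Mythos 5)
Your proposal follows exactly the paper's own route: project onto the six relevant coordinates, factor the signature as $\bigotimes_n \exp(\text{lead}_n)\otimes\exp(\text{lag}_n)$ via Chen's identity, expand each of the four words as a combination of $T_{\mathrm{sep}},T_{12},T_{23},D$ using the lead-before-lag vanishing and the $\exp$ factorials, and cancel; all the structural claims you make are the ones the paper uses, so the argument is sound. One caveat on the arithmetic you deferred: carrying out the one-step table gives the combination the value $\tfrac16 D$ (coefficients $\tfrac16+\tfrac12-0-\tfrac12$), so the normalising constant is $6$ rather than the $\tfrac65$ appearing in the statement and in the paper's final line.
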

\begin{proof}
The signature of the path $\mathbb{X}^{(i_{1}, i_{2}, i_{3})}$, which is the $(i_{1}, i_{2}, i_{3})$ coordinate projection of $\mathbb{X}$ is given as 
\begin{eqnarray*}
S(\mathbb{X}^{(i_{1}, i_{2}, i_{3})}) = \bigotimes_{n=1}^{L} \exp\left( x_{n}^{(i_{1})} e_{i_{1}} +  x_{n}^{(i_{2})} e_{i_{2}} +  x_{n}^{(i_{3})} e_{i_{3}} \right)
\end{eqnarray*} 
then it follows that
\begin{eqnarray*}
\pi^{(i_{1}, i_{2}, i_{3})} S(\mathbb{X}) &=& \sum_{n_{1} <  n_{2} < n_{3}}  x_{n_{1}}^{(i_{1})} x_{n_{2}}^{(i_{2})}x_{n_{3}}^{(i_{3})} + \frac{1}{2}\sum_{n_{1} <  n_{2} }  x_{n_{1}}^{(i_{1})} x_{n_{2}}^{(i_{2})}x_{n_{2}}^{(i_{3})} \\
&&+  \frac{1}{2}\sum_{n_{1}< n_{2} }  x_{n_{1}}^{(i_{1})} x_{n_{1}}^{(i_{2})}x_{n_{2}}^{(i_{3})} +  \frac{1}{6}\sum_{n_{1}=1 }^{L}  x_{n_{1}}^{(i_{1})} x_{n_{1}}^{(i_{2})}x_{n_{1}}^{(i_{3})}.
\end{eqnarray*}
The signature of the path $\mathbb{X}^{(i_{1}, i_{2}+d, i_{3}+d)}$, which is the $(i_{1}, i_{2}, i_{3})$ coordinate projection of $\mathbb{X}$ is given as 
\begin{eqnarray*}
S(\mathbb{X}^{(i_{1}, i_{2}+d, i_{3}+d)}) = \bigotimes_{n=1}^{L} \left( \exp\left( x_{n}^{(i_{1})} e_{i_{1}} \right) \otimes \exp \left(x_{n}^{(i_{2})} e_{i_{2}+d} +  x_{n}^{(i_{3})} e_{i_{3}+d} \right)\right).
\end{eqnarray*} 
then it follows that
\begin{eqnarray*}
\pi^{(i_{1}, i_{2}+d, i_{3}+d)} S(\mathbb{X}) &=& \sum_{n_{1} < n_{2} < n_{3}}  x_{n_{1}}^{(i_{1})} x_{n_{2}}^{(i_{2})}x_{n_{3}}^{(i_{3})}+ \frac{1}{2}\sum_{n_{1} <  n_{2} }  x_{n_{1}}^{(i_{1})} x_{n_{2}}^{(i_{2})}x_{n_{2}}^{(i_{3})} \\
&&+  \sum_{n_{1}< n_{2} }  x_{n_{1}}^{(i_{1})} x_{n_{1}}^{(i_{2})}x_{n_{2}}^{(i_{3})} +  \frac{1}{2}\sum_{n_{1}=1 }^{L}  x_{n_{1}}^{(i_{1})} x_{n_{1}}^{(i_{2})}x_{n_{1}}^{(i_{3})}.
\end{eqnarray*}
Similarly we have that the signature of the path $\mathbb{X}^{(i_{1}, i_{2}, i_{3}+d)}$, which is the $(i_{1}, i_{2}, i_{3}+d)$ coordinate projection of $\mathbb{X}$ is given as 
\begin{eqnarray*}
S(\mathbb{X}^{(i_{1}, i_{2}, i_{3}+d)}) = \bigotimes_{n=1}^{L} \left( \exp\left( x_{n}^{(i_{1})} e_{i_{1}} +  x_{n}^{(i_{2})} e_{i_{2}}  \right) \otimes \exp \left(  x_{n}^{(i_{3})} e_{i_{3}+d} \right)\right).
\end{eqnarray*} 
and thus it holds that
\begin{eqnarray*}
\pi^{(i_{1}, i_{2}, i_{3}+d)} (S(\mathbb{X})) &=& \sum_{n_{1} < n_{2} < n_{3}}  x_{n_{1}}^{(i_{1})} x_{n_{2}}^{(i_{2})}x_{n_{3}}^{(i_{3})}+\sum_{n_{1} <  n_{2} }  x_{n_{1}}^{(i_{1})} x_{n_{2}}^{(i_{2})}x_{n_{2}}^{(i_{3})} \\
&&+ \frac{1}{2} \sum_{n_{1}< n_{2} }  x_{n_{1}}^{(i_{1})} x_{n_{1}}^{(i_{2})}x_{n_{2}}^{(i_{3})} +  \frac{1}{2}\sum_{n_{1}=1 }^{L}  x_{n_{1}}^{(i_{1})} x_{n_{1}}^{(i_{2})}x_{n_{1}}^{(i_{3})}.
\end{eqnarray*}
Moreover we have that
\begin{eqnarray*}
\pi^{(i_{1}+d, i_{2}, i_{3}+d)} (S(\mathbb{X})) &=& \sum_{n_{1} < n_{2} < n_{3}}  x_{n_{1}}^{(i_{1})} x_{n_{2}}^{(i_{2})}x_{n_{3}}^{(i_{3})}+\sum_{n_{1} <  n_{2} }  x_{n_{1}}^{(i_{1})} x_{n_{2}}^{(i_{2})}x_{n_{2}}^{(i_{3})} .
\end{eqnarray*}
Combining the above equations, it follows that
\begin{eqnarray*}
\sum_{n = 1}^{L} x_{n }^{(i_{1})}x_{n}^{(i_{2})}x_{n}^{(i_{3})} = \frac{6}{5} \left(\pi^{(i_{1}, i_{2}, i_{3})}+  \pi^{(i_{1}, i_{2}, i_{3}+d)} -\pi^{(i_{1}+d, i_{2}, i_{3}+d)}- \pi^{(i_{1}, i_{2}+d, i_{3}+d)}\right)\left( S(\mathbb{X}) \right).
\end{eqnarray*}
\end{proof}
\section{ Numerical Examples }
\subsection{Toy Example 1: Correlation estimation}
In this toy example, we want to demonstrate that the signature of a stream can be used as a basis function to represent standard statistics, for example, the mean and the covariance matrix of the increment process. 
\begin{example}
We simulate $400$ samples of the pair $\left\{\rho_{n}, \mathbb{X}_{\rho_{n}}\right\}_{n = 1}^{N=400}$, where $\rho_{n}$ is iid and uniformly distributed in $[0, 1]$, and for each $\rho_{n}$, $\mathbb{X}_{\rho_{n}}$ is generated as a 2-dimensional random walk of length $L$ with the correlation $\rho_{n}$, i.e.
\begin{eqnarray*}
x_{\rho_{n}} \overset{iid}{=} \mathcal{N}\left(0,  \sigma^{2}\left( \begin{array}{cc}1 & \rho_{n}\\
\rho_{n} & 1 \end{array} \right)\right). 
\end{eqnarray*}
How can we estimate the model parameter $\rho$ for each sample path? 
\end{example}
Our method is simply to do the linear regression of the correlation parameter against the truncated signature of the sample path. To better judge the performance of our method, we used the empirical correlation as a benchmark. The empirical correlation for each sample path $\mathbb{X_{\rho}}$ is defined as follows:
\begin{eqnarray*}
\hat{\rho} = \frac{\overset{L-1}{\underset{n = 0}{\sum}} \left(x_{\rho}^{(1)}(n) - \bar{x}_{\rho}^{(1)}\right)\left(x_{\rho}^{(2)}(n) -  \bar{x}_{\rho}^{(2)}\right)}{\sqrt{\overset{L-1}{\underset{n = 0}{\sum}}\left(x_{\rho}^{(1)}(n) - \bar{x}_{\rho}^{(1)}\right)^{2} \overset{L-1}{\underset{n = 0}{\sum}} \left (x_{\rho}^{(2)}(n) -\bar{x}_{\rho}^{(2)}\right)^{2} }}
\end{eqnarray*}
Some parameters I chose are given as follows:
\begin{eqnarray*}
L = 120, N = 200, d = 3
\end{eqnarray*}
\begin{figure}[h!]
\centering
\begin{minipage}[b]{0.45 \linewidth}
     \includegraphics[width=\textwidth]{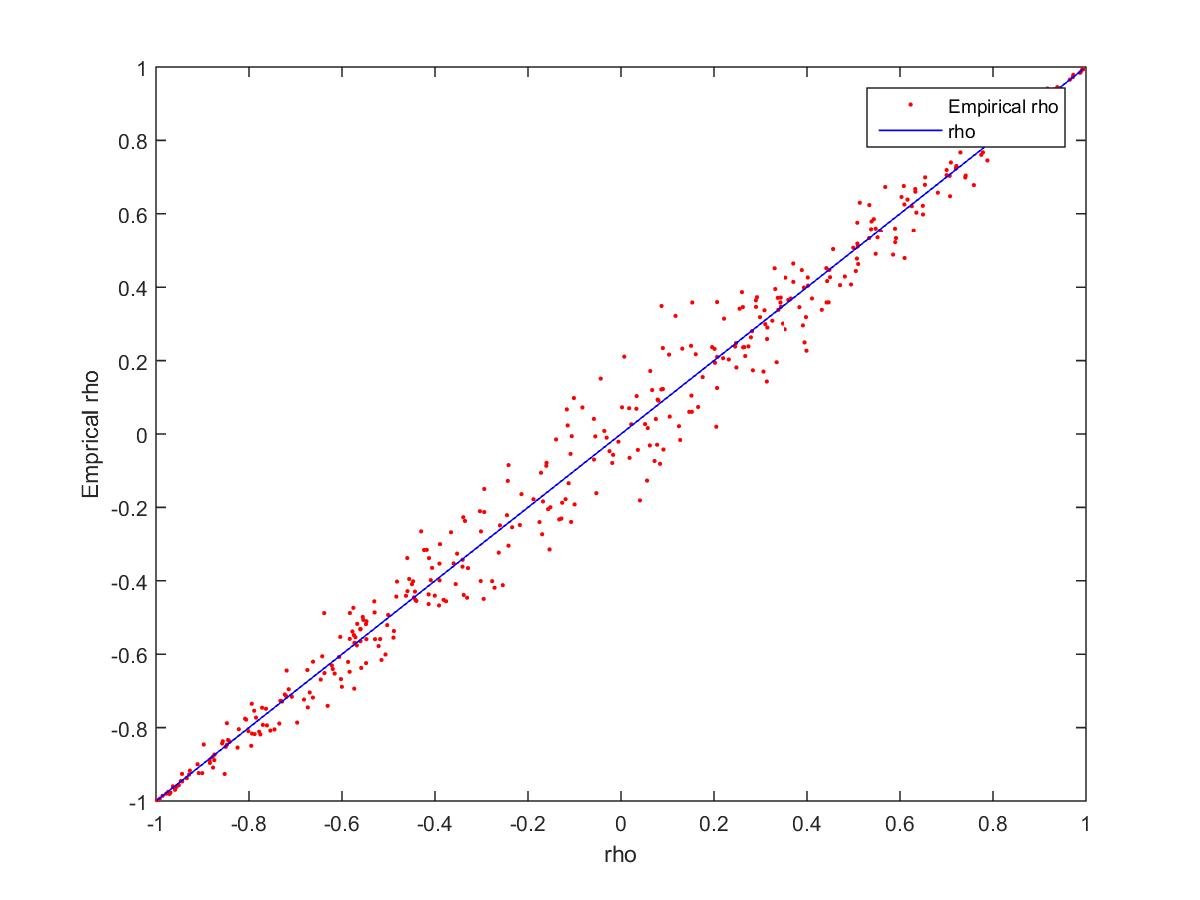} 
\caption{The plot of the empirical correlation v.s the actual correlation}
\end{minipage}
\quad
\begin{minipage}[b]{0.45 \linewidth}
     \includegraphics[width=\textwidth]{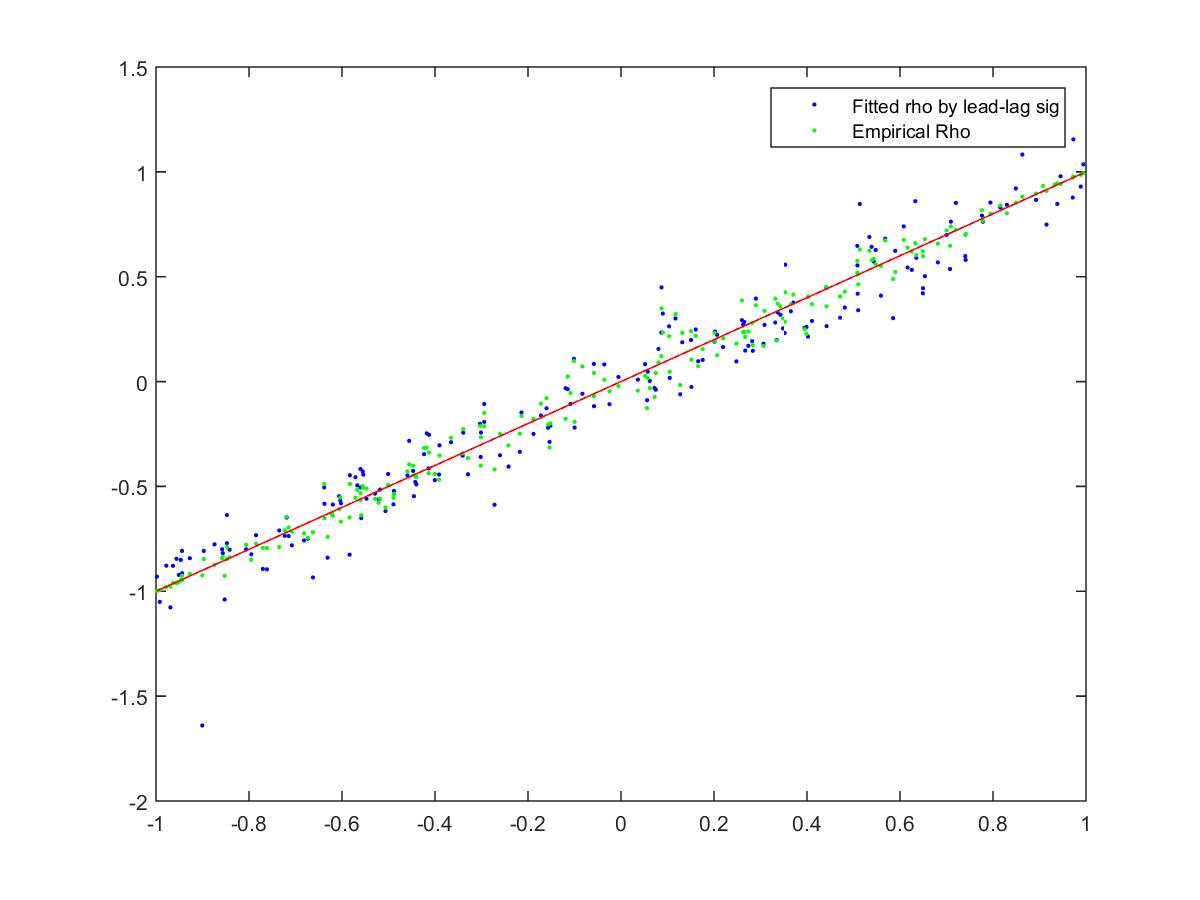}   
\caption{The plot of two estimated correlation against the actual correlation.}\label{SigRho}
\end{minipage}
\end{figure}
Figure \ref{SigRho} shows that the empirical correlation is better in terms of MSE, especially when $\rho$ is near $+1$ an $-1$. However due to the nature of polynomial regression, the signature-approach perform worse when $\rho$ is near the boundary. However the reason why the signature approach is not satisfactory is not because that the truncated signature do not include enough information of the path. Instead the reason is that the regression method we used is too simple and it should be combined with advanced non-linear regression techniques, e.g. rational regression or some local regression methods. Theoretically if properly combined with advanced regression techniques, we should be able to recover the empirical correlation. It is because that by definition of the signature of a stream, Lemma \ref{CovarianceLemma} shows that the empirical covariance/variance of the increment process is a linear combination of the truncated signature up to degree 2, and the ratio of the empirical covariance and the square root of empirical variance of two coordinate increments gives the empirical correlation. 

\subsection{Toy Example 2: Using signatures to classify two classes of random walks}
\begin{example}Let $\mathbb{X}$ denote a standard 3-dimensional random walk of length $L$, and $\mathbb{Y}$ denote the other random walk, where $y^{(1)}, y^{(2)}$ are independent and move to $+1$ and $-1$ with probability $0.5$, but $y^{(3)} = y^{(1)}y^{(2)}$. Given one realization of a random walk of length $L$ generated either by the distribution of $\mathbb{X}$ or that of $\mathbb{Y}$,  which distribution this realized path is from? 
\end{example}
In this example, we can't distinguish which distribution one sample path is generated from by looking at its empirical mean and covariance matrix of the increment distribution, it is simply because that
\begin{eqnarray*}
&&\mathbb{E}[x] = \mathbb{E}[y] = 0;\\
&& \text{cov}[x] = \text{cov}[y] = I_{3}.
\end{eqnarray*} 
But we can almost perfectly classify this sample path using the truncated signatures in this case. We summarize the procedure as follows:
\begin{enumerate}
\item We simulate $N$ paths based on the distribution of $\mathbb{X}$ and $\mathbb{Y}$ respectively. 
\item Compute the truncated signature of those sample paths up to degree $d$. 
\item For each sample path $\mathcal{X}$, let the response variable define in the following way:
\begin{eqnarray*}
f(\mathcal{X}) = \begin{cases} 1 &\mbox{if } \mathcal{X}\text{ is sampled from }\mathbb{X}; \\ 
 0 & \mbox{if }  \mathcal{X} \text{ is sampled from }\mathbb{Y}. \end{cases} 
\end{eqnarray*}
\item We randomly select half of the dataset as the learning set, and the rest data as the backtesting set. Apply SVM classification method to $f(\mathcal{X}) $ against $S(\mathcal{X})_{d}$ in the learning set, where $d=3$.
\item After obtaining the classifier $\hat{f}$, for any new given path $\mathcal{X}^{*}$, by plugging it to the classifier $\hat{f}$, the estimated class of  $\mathcal{X}^{*}$ is given by $\hat{f}(\mathcal{X}^{*})$.
\end{enumerate}
In this example, we choose $N = 200$, $L = 100$ and $d = 3$. The incorrect selection ratio is $1/400$, and it means that there is only one mis-classification for the whole dataset of size $400$. It is noted that the sample space of $\mathbb{Y}$ is actually the subspace of the sample space of $\mathbb{X}$, and theoretically if $\mathcal{X}$ is in the sample space of $\mathbb{X}$, its category is not distinguishable from this sample path trajectory. 

\section{Appendix}

\bibliographystyle{plain}
\bibliography{myref}
\end{document}